\newcommand{\qedhere}{\tag*{$\square$}}
\newcommand{\nat}{{\mathbb{N}}}
\newcommand{\ord}{{\sf Ord}}
\newcommand{\ind}{{\sf Ind}}
\newcommand{\dec}{{\sf Dec}}
\newcommand{\acc}{{\sf Acc}}
\tikzset{
    state/.style={
		        rectangle,
            rounded corners,
            draw=black,
            minimum height=2em,
            minimum width=2em,
            align=center,
            }
}
\tikzset{
    statep/.style={
            circle,
            draw=black,
            minimum height=2em,
            minimum width=2em,
            align=center,
            }
}
\tikzstyle{acc}=[double]
\newcommand{\X}{{\ensuremath{\mathbf{X}}}}
\newcommand{\F}{{\ensuremath{\mathbf{F}}}}
\newcommand{\G}{{\ensuremath{\mathbf{G}}}}
\newcommand{\U}{{\ensuremath{\mathbf{U}}}}
\newcommand{\true}{{\ensuremath{\mathbf{tt}}}}
\newcommand{\false}{{\ensuremath{\mathbf{ff}}}}
\newcommand{\setG}{\mathcal{G}}
\newcommand{\gsf}{{\ensuremath{\mathbb{G}}}}
\newcommand{\Reach}{{\it Reach}}
\newcommand{\oracledet}{\mathcal{U}}
\newcommand{\prodaut}{\mathcal{P}}
\newcommand{\compaut}{\mathcal{A}}
\newcommand{\aft}{{\it af}}
\newcommand{\os}{0.5mm}
\newcommand{\A}{\mathcal A}
\newcommand{\lang}{\mathsf{L}}
\newcommand{\width}{\mathit{width}}
\newcommand{\theoremlike}[2]{\par\medskip\penalty-250\refstepcounter{theorem}{{\bfseries\noindent#2
			\ref{#1}.}}}
\newcommand{\thmhelperpre}[2]{\theoremlike{#1}{#2}}
\newcommand{\thmhelperpost}{\par\medskip}
\newenvironment{reftheorem}[1]{\thmhelperpre{#1}{Theorem}}{\thmhelperpost}
\newenvironment{refproposition}[1]{\thmhelperpre{#1}{Proposition}}{
	\thmhelperpost }
\newcommand{\myspace}{\vspace*{-1em}}
\newcommand{\myspacet}{\vspace*{-0.3em}}
\begin{document}

\advance\textheight4mm
\advance\voffset-2mm

\title{From LTL and Limit-Deterministic B\"uchi Automata to Deterministic Parity Automata
\thanks{This work is partially funded by the DFG Research Training Group \enquote{PUMA: Programm- und Modell-Analyse} (GRK 1480), DFG project \enquote{Verified Model Checkers}, the ERC Starting Grant (279499: inVEST), and the Czech Science Foundation, grant No.~\mbox{P202/12/G061}.}
}
\author{Javier Esparza\inst{1} \and Jan K\v{r}et{\'i}nsk{\'y}\inst{1} \and Jean-Fran\c{c}ois Raskin \inst{2} \and Salomon Sickert \inst{1} }

\institute{Technische Universität München \email{\{esparza, jan.kretinsky, sickert\}@in.tum.de} \and Université libre de Bruxelles \email{jraskin@ulb.ac.be}}    
           
\maketitle

\pagestyle{plain}

\myspace

\begin{abstract}
Controller synthesis for general linear temporal logic (LTL) objectives is a challenging task. The standard approach involves translating the LTL objective into a deterministic parity automaton (DPA) by means of the Safra-Piterman construction. One of the challenges is the size of the DPA, which often grows very fast in practice, and can reach double exponential size in the length of the LTL formula. In this paper we describe a single exponential translation from limit-deterministic B\"uchi automata (LDBA) to DPA, and show that it can be concatenated with a recent efficient translation from LTL to LDBA to yield a double exponential, \enquote{Safraless} LTL-to-DPA construction. We also report on an implementation, a comparison with the SPOT library, and performance on several sets of formulas, including instances from the 2016 SyntComp competition. 
\end{abstract}


\section{Introduction}

Limit-deterministic Büchi automata (LDBA, also known as semi-deterministic Büchi automata) were introduced by Courcoubetis and Yannakakis (based on previous work by Vardi) to solve the 
qualitative probabilistic model-checking problem: Decide if the executions of a Markov chain or Markov Decision Process satisfy a given LTL formula with probability 1 \cite{DBLP:conf/focs/Vardi85,DBLP:conf/lics/VardiW86,DBLP:journals/jacm/CourcoubetisY95}. The problem faced by these authors was that fully nondeterministic Büchi automata (NBAs), which are as expressible as LTL, and more, cannot be used for probabilistic model checking, and deterministic Büchi automata (DBA) are less expressive than LTL. The solution was to introduce LDBAs as a model in-between: as expressive as NBAs, but deterministic enough.

After these papers, LDBAs received little attention. The alternative path of translating the LTL formula into an equivalent fully deterministic Rabin automaton using Safra's construction \cite{DBLP:conf/focs/Safra88} was considered a better option, mostly because it also solves the quantitative probabilistic model-checking problem (computing the probability of the executions that satisfy a formula). However, recent papers  have shown that LDBAs were unjustly forgotten. Blahoudek {\em et al.} have shown that LDBAs are easy to complement \cite{DBLP:conf/tacas/BlahoudekHSST16}. Kini and Viswanathan have given a single exponential translation of
LTL$_{\setminus\G\U}$ to LDBA \cite{DBLP:conf/tacas/KiniV15}. Finally, 
Sickert {\em et al.} describe in \cite{DBLP:conf/cav/SickertEJK16}  a double exponential translation for full LTL that can also be applied to the quantitative case, and behaves better than Safra's construction in practice.

In this paper we add to this trend by showing that LDBAs are also attractive for synthesis. The standard solution to the synthesis problem with LTL objectives consists of translating the LTL formula into a deterministic parity automaton (DPA) with the help of the Safra-Piterman construction \cite{DBLP:journals/lmcs/Piterman07}. While limit-determinism is not ``deterministic enough'' for the synthesis problem, we introduce a conceptually simple and worst-case optimal translation LDBA$\rightarrow$DPA. Our translation bears some similarities with that of \cite{bernd} where, 
however, a Muller acceptance condition is used. This condition can also be phrased as a Rabin condition, but not as a parity condition. Moreover, the way of tracking all possible states and finite runs differs.

Together with the translation LTL$\rightarrow$LDBA of \cite{DBLP:conf/cav/SickertEJK16}, our construction provides a \enquote{Safraless}, procedure to obtain a DPA from an LTL formula. However, the direct concatenation of the two constructions does not yield an algorithm of optimal complexity: the LTL$\rightarrow$LDBA translation is double exponential (and there is a double-exponential lower bound), and so for the LTL$\rightarrow$DPA translation we only obtain a triple exponential bound. In the second part of the paper we solve this problem. We show that the LDBAs derived from LTL formulas satisfy a special property, and prove that for such automata the concatenation of the two constructions remains double exponential. To the best of our knowledge, this is the first 
double exponential \enquote{Safraless} LTL$\rightarrow$DPA procedure. (Another asymptotically optimal
\enquote{Safraless} procedure for determinization of B\"uchi automata with Rabin automata as target
has been presented in \cite{DBLP:journals/iandc/FogartyKVW15}.)

In the third and final part, we report on the performance of an 
implementation of our LTL$\rightarrow$LDBA$\rightarrow$DPA construction, and compare it with
algorithms implemented in the SPOT library \cite{duret.16.atva2}. 
Note that it is not possible to force SPOT to always produce DPA, sometimes it produces a deterministic generalized Büchi automaton (DGBA).
The reason is that DGBA are often smaller than DPA (if they exist) and game-solving algorithms for DGBA are not less efficient than for DPA. 
Therefore, also our implementation may produce DGBA in some cases.
We show that our implementation outperforms SPOT for several sets of parametric formulas and formulas used in synthesis examples taken from the SyntComp 2016 competition, and remains competitive  for randomly generated formulas.

\myspace

\paragraph{{\bf Structure of the paper}}
Section~2 introduces the necessary preliminaries about automata. Section~3 defines the translation LDBA$\rightarrow$DPA. Section~4 shows how to compose of LTL$\rightarrow$LDBA  and LDBA$\rightarrow$DPA in such a way that the resulting DPA is at most doubly exponential in the size of the LTL formula. Section~5 reports on the experimental evaluation of this worst-case optimal translation, and Section~6 contains our conclusions.
The paper is a full version of a TACAS'17 paper.


\myspacet

\section{Preliminaries}

\myspacet

\paragraph{{\bf B\"uchi automata}} A (nondeterministic) $\omega$-word automaton $A$ with B\"uchi acceptance condition (NBA) is a tuple $(Q,q_0,\Sigma,\delta,\alpha)$ where $Q$ is a finite set of states, $q_0 \in Q$ is the {\em initial} state, $\Sigma$ is a finite alphabet, $\delta \subseteq Q \times \Sigma \times Q$ is the transition relation, and $\alpha \subseteq \delta$ is the set of {\em accepting} transitions\footnote{Here, we consider automata on infinite words with acceptance conditions based on transitions. It is well known that there are linear translations from automata with acceptance conditions defined on transitions to automata with acceptance conditions defined on states, and vice-versa.}. W.l.o.g. we assume that $\delta$ is total in the following sense: for all $q \in Q$, for all $\sigma \in \Sigma$, there exists $q' \in Q$ such that $(q,\sigma,q') \in \delta$. $A$ is {\em deterministic} if for all $q \in Q$, for all $\sigma \in \Sigma$, there exists a unique $q' \in Q$ such that $(q,\sigma,q') \in \delta$. When $\delta$ is deterministic and total, it can be equivalently seen as a function $\delta : Q \times \Sigma \rightarrow Q$. Given $S \subseteq Q$ and $\sigma \in \Sigma$, let ${\sf post}^{\sigma}_{\delta}(S)=\{ q' \mid \exists q \in S \cdot (q,\sigma,q') \in \delta \}$.

A {\em run} of $A$ on a $\omega$-word $w : \nat \rightarrow \Sigma$ is a $\omega$-sequence of states $\rho : \nat \rightarrow Q$ such that $\rho(0)=q_0$ and for all positions $i \in \nat$, we have that $(\rho(i),w(i),\rho(i+1)) \in \delta$. A run $\rho$ is {\em accepting} if there are infinitely many positions $i \in \nat$ such that $(\rho(i),w(i),\rho(i+1)) \in \alpha$.  The {\em language} defined by $A$, denoted by $\lang(A)$, is the set of $\omega$-words $w$ for which $A$ has an accepting run.

A {\em limit-deterministic B\"uchi automaton} (LDBA) is a B\"uchi automaton $A=(Q,q_0,\Sigma,\delta,\alpha)$ such that there exists a subset $Q_d \subseteq Q$ satisfying the three following properties:
  \begin{enumerate}
  	\item $\alpha \subseteq Q_d \times \Sigma \times Q_d$, i.e. all accepting transitions are transitions within $Q_d$;
	\item $\forall q \in Q_d \cdot \forall \sigma \in \Sigma \cdot \forall q_1,q_2 \in Q \cdot (q,\sigma,q_1) \in \delta \land (q,\sigma,q_2) \in \delta \rightarrow q_1=q_2$, i.e. the transition relation $\delta$ is deterministic within $Q_d$
	\item $\forall q \in Q_d \cdot \forall \sigma \in \Sigma \cdot \forall q' \in Q \cdot (q,\sigma,q') \in \delta \rightarrow q' \in Q_d$, i.e. $Q_d$ is a trap (when $Q_d$ is entered it is never left).
  \end{enumerate}
  \noindent
 W.l.o.g. we assume that $q_0 \in Q \setminus Q_d$, and we denote $Q \setminus Q_d$ by $\overline{Q_d}$.
  Courcoubetis and Yannakakis show that for every $\omega$-regular language ${\cal L}$, there exists an LDBA $A$  such that $\lang(A)={\cal L}$ \cite{DBLP:journals/jacm/CourcoubetisY95}. That is, LDBAs are as expressive as NBAs. 
  An example of LDBA is given in Fig.~\ref{fig:ex-LDBA}. Note that the language accepted by this LDBA cannot be recognized by a deterministic B\"uchi automaton.

\begin{figure}[t]
	\myspace
  \begin{center}

  \begin{tikzpicture}
  [x=2cm,y=4mm,font=\small,initial text=,outer sep=0pt]
  \tikzset{
  	state/.style={
  		circle,
  		rounded corners,
  		draw=black,
  		minimum size=1em,
  		align=center,
  	}
  }
  
  \node[state,initial] (1) at (0,0) {$1$};
  \node[state] (2) at (1,1) {$2$}; 
  \node[state] (3) at (1,-1) {$3$};   
  \node[state] (4) at (2,0) {$4$};   		
  
  \path[->]
  (1) edge[] node[above]{$a$}   (2)
  (1) edge[loop above] node[above]{$\Sigma$}  ()
  (2) edge[loop above,ultra thick] node[left,pos=0.2]{$a$}  ()
  (2) edge[loop above,white] ()
  (1) edge[]  node[below]{$b$}  (3)
  (3) edge[loop below,ultra thick]  node[left,pos=0.8]{$b$} ()
  (3) edge[loop below, white] ()
  (2) edge[] node[above]{$\Sigma\setminus\{a\}$}  (4.165)
  (3) edge[] node[below]{$\Sigma\setminus\{b\}$}   (4.195)   
  (4) edge[loop above] node[above]{$\Sigma$}  ()	
  ;
  \end{tikzpicture}
    \end{center}
    \myspace\myspace
\caption{\label{fig:ex-LDBA} An LDBA for the LTL language $\F \G a \lor \F \G b$. The behavior of $A$ is deterministic within the subset of states $Q_d=\{2,3,4\}$ which is a trap, the set of accepting transitions are depicted in bold face and they are defined only between states of $Q_d$.}
\myspace
\end{figure}
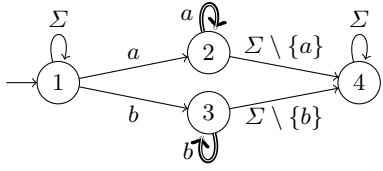

\paragraph{{\bf Parity automata}}
A  deterministic $\omega$-word automaton $A$ with {\em parity} acceptance condition (DPA) is a tuple $(Q,q_0,\Sigma,\delta,p)$, defined as for deterministic B\"uchi automata with the exception of the acceptance condition $p$, which is now a function assigning an integer in $\{ 1, 2, \dots, d \}$, called a {\em color}, to each transition in the automaton. Colors are naturally ordered by the order on integers. 

Given a run $\rho$ over a word $w$, the infinite sequence of colors traversed by the run $\rho$ is noted $p(\rho)$ and is equal to $p(\rho(0),w(0),\rho(1))$ $p((\rho(1),w(1),\rho(2)) \dots$ $p(\rho(n),w(n),\rho(n+1)) \dots$. A run $\rho$ is {\em accepting} if the minimal color that appears infinitely often along $p(\rho)$ is {\em even}. The {\em language} defined by $A$, denoted by $\lang(A)$ is the set of $\omega$-words $w$ for which $A$ has an accepting run.

While deterministic B\"uchi automata are not expressively complete for the class of $\omega$-regular languages, DPAs are complete for $\omega$-regular languages: for every $\omega$-regular language ${\cal L}$ there exists a DPA $A$ such that $\lang(A)={\cal L}$, see e.g.~\cite{DBLP:journals/lmcs/Piterman07}.

\section{From LDBA to DPA}

\subsection{Run DAGs and their coloring}

\paragraph{{\bf Run DAG}} A nondeterministic automaton $A$ may have several (even an infinite number of) runs on a given $\omega$-word $w$. As in~\cite{DBLP:journals/tocl/KupfermanV01}, we represent this set of runs by means of a directed acyclic graph structure called the {\em run DAG} of $A$ on $w$. Given an LDBA $A=(Q,Q_d,q_0,\Sigma,\delta,\alpha)$, this graph $G_w=(V,E)$ has a set of vertices $V \subseteq Q \times \nat$ and edges $E \subseteq V \times V$ defined as follows:

  \begin{itemize}
  	\item $V = \bigcup_{i \in \nat} V_i$, where the sets $V_i$ are defined inductively: 
			\begin{itemize}
				\item $V_0=\{ (q_0,0) \}$, and for all $i \geq 1$, 
				\item $V_i = \{ (q,i) \mid \exists (q',i-1) \in V_{i-1} : (q',w(i),q) \in \delta \}$;
			\end{itemize}
	\item $E = \{ ((q,i),(q',i+1)) \in V_i \times V_{i+1} \mid (q,w(i),q') \in \delta \}$.
  \end{itemize} 
  \noindent
  We denote by $V^d_{i}$ the set $V_i \cap (Q_d \times \{i\})$ that contains the subset of vertices of layer $i$ that are associated with states in $Q_d$.

Observe that all the paths of $G_w$ that start from $(q_0,0)$ are runs of $A$ on $w$, and, conversely,  each run $\rho$ of $A$ on $w$ corresponds exactly to one path in $G_w$ that starts from $(q_0,0)$. So, we call {\em runs} the paths in the run DAG $G_w$. In particular, we say that an infinite path $v_0 v_1 \dots v_n \dots$ of $G_w$ is an accepting run if there are infinitely many positions $i \in \nat$ such that $v_i=(q,i)$, $v_{i+1}=(q',i+1)$, and $(q,w(i),q') \in \alpha$. Clearly, $w$ is accepted by $A$ if and only if there is an accepting run in $G_w$.  We denote by $\rho(0..n)=v_0 v_1 \dots v_n$ the prefix of length $n+1$ of the run $\rho$.

\paragraph{{\bf Ordering of runs}} 
A function $\ord : Q \rightarrow \{ 1,2,\dots,|Q_d|,+\infty\}$ is called an {\em ordering} of the states of $A$ w.r.t. $Q_d$ if  $\ord$ defines a strict total order on the state from $Q_d$, and maps each state $q \in \overline{Q_d}$ to $+ \infty$, i.e.: 
  \begin{itemize}
	\item for all $q \in \overline{Q_d}$, $\ord(q)=+\infty$, 
	\item for all $q \in Q_d$, $\ord(q)\not=+\infty$, and 
	\item for all $q,q' \in Q_d$, $\ord(q)=\ord(q')$ implies $q=q'$. 
  \end{itemize}
  \noindent
  We extend $\ord$ to vertices in $G_w$ as follows: $\ord((q,i))=\ord(q)$.

Starting from $\ord$, we define the following pre-order on the set of run prefixes of the run DAG $G_w$.
Let $\rho(0..n)=v_0 v_1 \dots v_n \dots$ and $\rho'(0..n)=v'_0 v'_1 \dots v'_n \dots$ be two run prefixes of length $n+1$, we write $\rho(0..n) \sqsubseteq \rho'(0..n)$, if $\rho(0..n)$ is {\em smaller than} $\rho'(0..n)$, which is defined as:
  \begin{itemize}
  	\item for all $i$, $0 \leq i \leq n$, $\ord(\rho(i))=\ord(\rho'(i))$, or 
	\item there exists $i$, $0 \leq i \leq n$, such that:
		\begin{itemize}
			\item $\ord(\rho(i)) < \ord(\rho'(i))$, and 
			\item for all $j$, $0 \leq j < i$, $\ord(\rho(j))=\ord(\rho'(j))$. 
		\end{itemize}
  \end{itemize}
  \noindent
  This is extended to (infinite) runs as: $\rho \sqsubseteq \rho'$ iff for all $i \geq 0 \cdot \ord(\rho(0..i)) \sqsubseteq \ord(\rho'(0..i))$. 

\begin{remark}
If $A$ accepts a word $w$, then $A$ has a $\sqsubseteq$-smallest accepting run for $w$.
\end{remark}

We use the $\sqsubseteq$-relation on run prefixes to order the vertices of $V_i$ that belong to $Q_d$: for two different vertices $v=(q,i) \in V_i$ and $v'=(q',i) \in V_i$, $v$ is $\sqsubset_i$-smaller than $v'$, if there is a run prefix of $G_w$ that ends up in $v$ which is $\sqsubseteq$-smaller than all the run prefixes that ends up in $v'$, which induces a total order among the vertices of $V^d_i$ because the states in $Q_d$ are totally ordered by the function $\ord$. 

\begin{lemma}
For all $i \geq 0$, for two different vertices $v=(q,i),v'=(q',i) \in V^d_i$, then either $v \sqsubset_i v'$ or $v' \sqsubset_i v$, i.e., $\sqsubset_i$ is a total order on $V^d_i$.
\end{lemma}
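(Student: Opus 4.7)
The plan is to prove the lemma by working with the $\sqsubseteq$-minimum run prefix ending at each vertex, and showing that these minima are themselves comparable in the strict sense needed for $\sqsubset_i$.

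First I would observe that $\sqsubseteq$ restricted to run prefixes of length $i+1$ is a total pre-order: given any two prefixes $\rho(0..i)$ and $\rho'(0..i)$, either their $\ord$-sequences coincide on $\{0,\dots,i\}$ (in which case the prefixes are $\sqsubseteq$-equivalent), or there is a first position $j$ where the $\ord$-values differ and that position determines which prefix is strictly smaller. Transitivity follows from the lexicographic shape of the definition, and antisymmetry modulo the equivalence ``same $\ord$-sequence'' is immediate. Since for each $v \in V_i$ there are only finitely many run prefixes of length $i+1$ ending at $v$, totality ensures the existence of a $\sqsubseteq$-minimum prefix, which I denote $\min(v)$ (unique up to the equivalence above).

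Next, for $v=(q,i)$ and $v'=(q',i)$ two distinct vertices in $V_i^d$, the states $q$ and $q'$ lie in $Q_d$ with $q\neq q'$, so by the injectivity of $\ord$ on $Q_d$ we have $\ord(v)\neq \ord(v')$. Since $\min(v)$ ends at $v$ and $\min(v')$ ends at $v'$, their $\ord$-values disagree at position $i$, hence $\min(v)$ and $\min(v')$ are not $\sqsubseteq$-equivalent. By totality, one of them is strictly smaller than the other; without loss of generality assume $\min(v)\sqsubset \min(v')$.

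The key step, and the only non-routine part, is then to upgrade this to: $\min(v)\sqsubset \sigma$ for \emph{every} run prefix $\sigma$ ending at $v'$. Suppose, towards a contradiction, that some prefix $\sigma$ ending at $v'$ satisfies $\sigma\sqsubseteq\min(v)$. Since $\min(v')$ is the $\sqsubseteq$-minimum among prefixes ending at $v'$, we have $\min(v')\sqsubseteq \sigma$, and transitivity yields $\min(v')\sqsubseteq \min(v)$, contradicting $\min(v)\sqsubset \min(v')$. Hence no such $\sigma$ exists, and by totality $\min(v)\sqsubset \sigma$ for every prefix $\sigma$ ending at $v'$. By the definition of $\sqsubset_i$, this gives $v\sqsubset_i v'$, completing the proof. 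The main obstacle I expect is precisely this last reduction from ``minima are comparable'' to ``one minimum beats \emph{all} prefixes ending at the other vertex'', which is what makes using $\min(v')$ as a gatekeeper essential.
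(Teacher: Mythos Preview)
Your proof is correct. The paper states this lemma without proof, presumably regarding it as immediate from the injectivity of $\ord$ on $Q_d$ (so that prefixes ending at distinct vertices of $V_i^d$ always disagree in their $\ord$-sequence at position $i$). Your argument makes this precise in the natural way: isolate the $\sqsubseteq$-minimum prefix at each vertex, observe these minima cannot be $\sqsubseteq$-equivalent, and then use transitivity of the lexicographic preorder to lift the strict comparison of minima to the required ``$\min(v)$ beats every prefix at $v'$'' statement. There is no proof in the paper to compare against, but nothing in your plan is superfluous or gapped.
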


\paragraph{{\bf Indexing vertices}}
The index of a vertex $v=(q,i) \in V_i$ such that $q \in Q_d$, denoted by $\ind_i(v)$, is a value in $\{1,2,\dots,|Q_d|\}$ that denotes its order in $V^d_i$ according to $\sqsubset_i$ (the $\sqsubset_i$-smallest element has index $1$). 
For $i \geq 0$, we identify two important sets of vertices:
  \begin{itemize}
  	\item $\dec(V^d_{i})$ is the set of vertices $v \in V^d_{i}$ such that there exists a vertex $v' \in V^d_{i+1}$: $(v,v') \in E$ and $\ind_{i+1}(v') < \ind_{i}(v)$, i.e. the set of vertices in $V^d_{i}$ whose (unique) successor in $V^d_{i+1}$ has a smaller index value. 
	\item $\acc(V^d_{i})$ is the set of vertices $v=(q,i) \in V^d_{i}$ such that there exists $v'=(q',i+1) \in V^d_{i+1}$: $(v,v') \in E$ and $(q,w(i),q') \in \alpha$, i.e. the set of vertices in $V^d_{i}$ that are the source of an accepting transition on $w(i)$.
  \end{itemize}

\begin{remark}
Along a run, the index of vertices can only decrease. As the function $\ind(\cdot)$ has a finite range, the index along a run has to eventually stabilize.
\end{remark}

\paragraph{{\bf Assigning colors}}
The set of colors that are used for coloring the levels of the run DAG $G_w$ is $\{1, 2, \dots, 2\cdot|Q_d|+1\}$.
We associate a color with each transition from level $i$ to level $i+1$ according to the following set of cases:
\begin{enumerate}
	\item if $\dec(V^d_{i})=\emptyset$ and $\acc(V^d_i)\not=\emptyset$, the color is $2 \cdot\min_{v \in \acc(V^d_{i})} \ind_{i}(v)$.
	\item if $\dec(V^d_{i})\not=\emptyset$ and $\acc(V^d_i)=\emptyset$, the color is $2 \cdot\min_{v \in \dec(V^d_{i})} \ind_{i}(v)-1$.
	\item if $\dec(V^d_{i})\not=\emptyset$ and $\acc(V^d_i)\not=\emptyset$, the color is defined as the minimal color among 
		\begin{itemize}
			\item $c_{{\sf odd}}=2 \cdot\min_{v \in \dec(V^d_{i})} \ind_{i}(v)-1$, and 
			\item $c_{{\sf even}}=2 \cdot \min_{v \in \acc(V^d_{i})} \ind_{i}(v)$.
		\end{itemize}
	\item if $\dec(V^d_{i})=\acc(V^d_i)=\emptyset$, the color is $2 \cdot |Q_q|+1$.
\end{enumerate}

The intuition behind this coloring is as follows: the coloring tracks runs in $Q_d$ (only those are potentially accepting as $\alpha \subseteq Q_d \times \Sigma \times Q_d$) and tries to produce an even color that corresponds to the smallest index of an accepting run. If in level $i$ the run DAG has an outgoing transition that is accepting, then this is a \textit{positive event}, as a consequence the color emitted is {\em even} and it is a function of the smallest index of a vertex associated with an accepting transition from $V_{i}$ to $V_{i+1}$. Runs in $Q_d$ are deterministic but they can merge with \textit{smaller} runs. When this happens, this is considered as a \textit{negative event} because the even colors that have been emitted by the run that merges with the smaller run should  not be taken into account anymore. As a consequence an odd color is emitted in order to cancel all the (good) even colors that were generated by the run that merges with the smaller one. In that case the odd color is function of the smallest index of a run vertex in $V_{i}$ whose run merges with a smaller vertex in $V_{i+1}$. Those two first cases are handled by cases $1$ and $2$ of the case study above. When both situations happen at the same time, then the color is determined by the minimum of the two colors assigned to the positive and the negative events. This is handled by case 3 above. And finally, when there is no accepting transition from $V_{i}$ to $V_{i+1}$ and no merging, the largest odd color is emitted as indicated by case 4 above. 

According to this intuition, we define the {\em color summary} of the run DAG $G_w$ as the minimal color that appears infinitely often along the transitions between its levels.  Because of the deterministic behavior of the automaton in $Q_d$, each run can only merge at most $| Q_d |-1$ times with a smaller one (the size of the range of the function $\ind(\cdot)$ minus one), and as a consequence of the definition of the above coloring, we know that, on word accepted by $A$, the smallest accepting run will eventually generate infinitely many (good) even colors that are never trumped by smaller odd colors. 
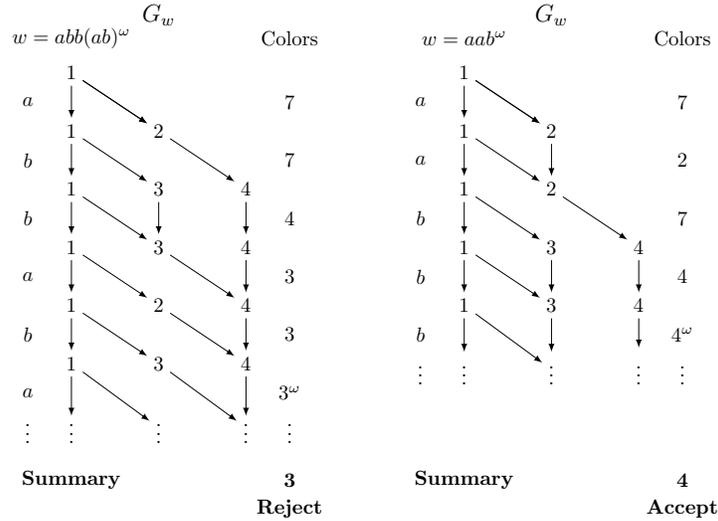
\begin{figure}
	\myspace\myspace
 \begin{center}
        \resizebox{!}{7cm}{\begin{tikzpicture}
[x=1.5cm,y=1.0cm,font=\normalsize,initial text=,outer sep=0pt]

\begin{scope}
\node (w0) at (-0.5,-0.5) {$a$};
\node (w1) at (-0.5,-1.5) {$b$};
\node (w2) at (-0.5,-2.5) {$b$};
\node (w3) at (-0.5,-3.5) {$a$};
\node (w4) at (-0.5,-4.5) {$b$};
\node (w4) at (-0.5,-5.5) {$a$};
\node (w5) at (-0.5,-6.1) {$\vdots$};

\node (w)  at (0,0.6) {$w=abb(ab)^\omega$};
\node (00) at (0,0) {$1$};
\node (01) at (0,-1) {$1$};
\node (02) at (0,-2) {$1$};
\node (03) at (0,-3) {$1$};
\node (04) at (0,-4) {$1$};
\node (05) at (0,-5) {$1$};
\node (06) at (0,-6.1) {$\vdots$};
\node (07) at (0,-7) {\textbf{Summary}};

\node (Gw)  at (1,1.0) {{\large $G_w$}};
\node (11) at (1,-1) {$2$};
\node (12) at (1,-2) {$3$};
\node (13) at (1,-3) {$3$};
\node (14) at (1,-4) {$2$};
\node (15) at (1,-5) {$3$};
\node (16) at (1,-6.1) {$\vdots$};

\node (22) at (2,-2) {$4$};
\node (23) at (2,-3) {$4$};
\node (24) at (2,-4) {$4$};
\node (25) at (2,-5) {$4$};
\node (26) at (2,-6.1) {$\vdots$};

\node (c)  at (2.5,0.6) {Colors};
\node (30) at (2.5,-0.5) {$7$};
\node (31) at (2.5,-1.5) {$7$};
\node (32) at (2.5,-2.5) {$4$};
\node (33) at (2.5,-3.5) {$3$};
\node (34) at (2.5,-4.5) {$3$};
\node (35) at (2.5,-5.5) {$3^\omega$};
\node (36) at (2.5,-6.1) {$\vdots$};
\node (37) at (2.5,-7) {$\mathbf{3}$};
\node (38) at (2.5,-7.5) {\textbf{Reject}};

\pgfsetarrowsend{latex}

\draw (00)--(01);\draw (01)--(02); \draw (02)--(03);\draw (03)--(04);\draw (04)--(05);\draw (05)--(0,-5.9);
\draw (12)--(13);
\draw (22)--(23);\draw (23)--(24);\draw (24)--(25);\draw (25)--(2,-5.9);
\draw (00)--(11);

\draw (00)--(11);\draw (01)--(12); \draw (02)--(13);\draw (03)--(14);\draw (04)--(15);\draw (05)--(16);
\draw (11)--(22);\draw (13)--(24); \draw (14)--(25);\draw (15)--(26);
\end{scope}

\begin{scope}[shift={(4.5,0)}]
\node (w0) at (-0.5,-0.5) {$a$};
\node (w1) at (-0.5,-1.5) {$a$};
\node (w2) at (-0.5,-2.5) {$b$};
\node (w3) at (-0.5,-3.5) {$b$};
\node (w3) at (-0.5,-4.5) {$b$};
\node (w5) at (-0.5,-5.1) {$\vdots$};

\node (w)  at (0,0.6) {$w=aab^\omega$};
\node (00) at (0,0) {$1$};
\node (01) at (0,-1) {$1$};
\node (02) at (0,-2) {$1$};
\node (03) at (0,-3) {$1$};
\node (04) at (0,-4) {$1$};
\node (05) at (0,-5.1) {$\vdots$};
\node (07) at (0,-7) {\textbf{Summary}};

\node (Gw)  at (1,1.0) {{\large $G_w$}};
\node (11) at (1,-1) {$2$};
\node (12) at (1,-2) {$2$};
\node (13) at (1,-3) {$3$};
\node (14) at (1,-4) {$3$};
\node (15) at (1,-5.1) {$\vdots$};

\node (23) at (2,-3) {$4$};
\node (24) at (2,-4) {$4$};
\node (25) at (2,-5.1) {$\vdots$};

\node (c)  at (2.5,0.6) {Colors};
\node (30) at (2.5,-0.5) {$7$};
\node (31) at (2.5,-1.5) {$2$};
\node (32) at (2.5,-2.5) {$7$};
\node (33) at (2.5,-3.5) {$4$};
\node (35) at (2.5,-4.5) {$4^\omega$};
\node (36) at (2.5,-5.1) {$\vdots$};
\node (37) at (2.5,-7) {$\mathbf{4}$};
\node (38) at (2.5,-7.5) {\textbf{Accept}};

\pgfsetarrowsend{latex}

\draw (00)--(01);\draw (01)--(02); \draw (02)--(03);\draw (03)--(04);\draw (04)--(0, -4.8);
\draw (00)--(11);\draw (01)--(12); \draw (02)--(13);\draw (03)--(14);\draw (04)--(15);
\draw (11)--(12);\draw (13)--(14);\draw (14)--(1,-4.8);
\draw (12)--(23);
\draw (23)--(24);\draw (24)--(25);
\draw (00)--(11);

\
\end{scope}

\end{tikzpicture}}
  \end{center}
  \myspace
\caption{\label{fig:run-dag} The run DAGs automaton of Fig.~\ref{fig:ex-LDBA} on the word $w=(ab)^{\omega}$ given on the left, and on the word $w=aab^{\omega}$ given on the right, together with their colorings.}
\myspace\myspace
\end{figure}

\begin{example}
The left part of Fig.~\ref{fig:run-dag} depicts the run DAG of the limit-deterministic automaton of Fig.~\ref{fig:ex-LDBA} on the word $w=abb(ab)^{\omega}$. Each path in this graph represents a run of the automaton on this word. The coloring of the run DAG follows the coloring rules defined above. Between level $0$ and level $1$, the color is equal to $7= 2|Q_d| + 1$, as no accepting edge is taken from level $0$ to level $1$ and no run merges (within $Q_d$). The color $7$ is also emitted from level $1$ to level $2$ for the same reason. The color $4$ is emitted from level $2$ to level $3$ because the accepting edge $(3,b,3)$ is taken and the index of state $3$ in level $2$ is equal to $2$ (state $4$ has index $1$ as it is the end point of the smallest run prefix within $Q_d$). The color $3$ is emitted from level $3$ to level $4$ because the run that goes from $3$ to $4$ merges with the smaller run that goes from $4$ to $4$. In order to cancel the even colors emitted by the run that goes from $3$ to $4$, color $3$ is emitted. It cancels the even color $4$ emitted before by this run. Afterwards, colors $3$ is emitted forever. The color summary is $3$ showing that there is no accepting run in the run DAG.

The right part of Fig.~\ref{fig:run-dag} depicts the run DAG of the limit deterministic automaton of Fig.~\ref{fig:ex-LDBA} on the word $w=aab^{\omega}$. The coloring of the run DAG follows the coloring rules defined above. Between levels $0$ and $1$, color $7$ is emitted because no accepting edge is crossed. To the next level, we see the accepting edge $(2,a,2)$ and color $2\cdot1=2$ is emitted.
Upon reading the first $b$, we see again $7$ since there is neither any accepting edge seen nor any merging takes place. 
Afterwards, each $b$ causes an accepting edge $(3,b,3)$ to be taken. While the smallest run, which visits $4$ forever, is not accepting, the second smallest run that visits $3$ forever is accepting. As $3$ has index $2$ in all the levels below level $3$, the color is forever equal to $4$. The color summary of the run is thus equal to $2\cdot 2=4$ and this shows that word $w=aab^{\omega}$ is accepted by our limit deterministic automaton of Fig.~\ref{fig:ex-LDBA}.
\end{example}

The following theorem tells us that the color summary (the minimal color that appears infinitely often) can be used to identify run DAGs that contain accepting runs. The proof can be found in Appendix~\ref{app:coloring}.

\begin{theorem} \label{thm:coloring}
The color summary of the run DAG $G_w$ is even if and only if there is an accepting run in $G_w$.
\end{theorem}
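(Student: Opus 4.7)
The plan is to establish both implications via a careful analysis of how indices in $V^d$ evolve between successive levels, together with the remarks that the $\sqsubseteq$-smallest accepting run $\rho^*$ exists whenever an accepting run exists and that indices along a run are non-increasing.

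For the $(\Leftarrow)$ direction, since $\alpha \subseteq Q_d \times \Sigma \times Q_d$ forces $\rho^*$ to eventually stay in $V^d$, combined with the non-increasing index remark there exist $N$ and $k^*$ with $\ind_i(\rho^*(i)) = k^*$ for all $i \geq N$. The goal is to show that the color summary equals $2k^*$, split into two sub-claims. \emph{Claim A:} from level $N$ on no decrease at any index $\leq k^*$ is emitted. A decrease at index $k^*$ would mean $\rho^*(i)$ itself has a successor at index $<k^*$, contradicting stability of $k^*$. A decrease at some $j<k^*$ means that the index-$j$ vertex in $V^d_i$ merges with an index-$l<j$ vertex; recomputing the $\sqsubset_{i+1}$-indices from the smallest-prefix definition shows that every vertex previously at index $\geq j+1$ shifts down by at least one, so in particular $\rho^*(i+1)$ lands at index $<k^*$, again contradicting stability. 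I expect this index-shift bookkeeping to be the main technical obstacle. \emph{Claim B:} from some $N'\geq N$ no accepting transition at an index $<k^*$ is taken. Claim~A implies that the $k^*-1$ vertices at indices $1,\dots,k^*-1$ form persistent runs $\mu_1,\dots,\mu_{k^*-1}$ in $V^d$ for $i\geq N$. If some $\mu_j$ were accepting I would build a strictly smaller accepting run $\mu_j^\star$ by prepending to the deterministic tail $\mu_j(N)\mu_j(N+1)\cdots$ the $\sqsubseteq$-smallest prefix $\pi_j$ reaching $\mu_j(N)$. Since $\ind_N(\mu_j(N)) = j < k^*$, the prefix $\pi_j$ is $\sqsubset$-smaller than every prefix reaching $\rho^*(N)$, hence than $\rho^*(0..N)$; a pointwise comparison of $\mu_j^\star(0..i)$ with $\rho^*(0..i)$ (distinguishing $i<N$ from $i\geq N$) then yields $\mu_j^\star \sqsubset \rho^*$, contradicting the minimality of $\rho^*$.

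Combining Claims~A and~B, from level $N'$ every emitted color is $\geq 2k^*$; since $\rho^*$ accepts at the unique index-$k^*$ vertex infinitely often, color $2k^*$ is emitted infinitely often, so the color summary equals $2k^*$. For the $(\Rightarrow)$ direction I assume the color summary is an even $2k$, so from some $N$ on every emitted color is $\geq 2k$. This decomposes into the absence of a decrease at any index $\leq k$ and of an accepting transition at any index $<k$. The first property together with non-increasing indices implies that once a vertex occupies index $k$ in $V^d_i$ for some $i\geq N$, it keeps index $k$ thereafter; and because color $2k$ is emitted infinitely often such a vertex must eventually appear. It therefore forms a persistent run $\rho^*$ in $V^d$, and each emission of color $2k$ forces an accepting transition at the minimum accepting index $k$, which must be $\rho^*$ itself. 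Hence $\rho^*$ takes infinitely many accepting transitions, and extending it back to $(q_0,0)$ along any path in $G_w$ gives the required accepting run.
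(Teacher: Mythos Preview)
Your proposal is correct and follows essentially the same two-part structure as the paper: in both directions one identifies a run whose index eventually stabilizes, then argues that no ${\sf Dec}$ event can occur at or below that index and that no ${\sf Acc}$ event occurs strictly below it.

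The main difference is in the choice of the witnessing run for the $(\Leftarrow)$ direction. The paper selects an accepting run that enters $Q_d$ earliest and, among those, via a state of minimal ${\sf Ord}$; it then lets the index of this run settle by waiting for the finitely many merges of smaller runs. You instead invoke the paper's own Remark and take the $\sqsubseteq$-smallest accepting run $\rho^*$ directly, which makes Claim~B cleaner: any accepting $\mu_j$ with $j<k^*$ immediately yields a strictly $\sqsubseteq$-smaller accepting run, contradicting minimality. This avoids the paper's somewhat informal argument that ``all smaller run prefixes active in level $i$ are the origin of non-accepting runs''. One small imprecision in your Claim~A: when a vertex at index $j$ lies in ${\sf Dec}$ it need not itself merge with a smaller vertex (the merge may happen among $v_1,\dots,v_{j-1}$); what does hold, and is all you need, is that the indices in ${\sf Dec}$ form an upward-closed interval, so $j\in{\sf Dec}$ with $j<k^*$ forces $k^*\in{\sf Dec}$ and hence $\ind_{i+1}(\rho^*(i+1))<k^*$. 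Your stated conclusion is therefore correct even though the ``merges with'' justification should be replaced by this interval observation.
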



\subsection{Construction of the DPA}

From an LDBA $A=(Q,Q_d,q_0,\Sigma,\delta,\alpha)$ and an ordering function $\ord  : Q \rightarrow \{1,2,\dots,|Q_d|,+\infty\}$ compatible with $Q_d$, we construct a deterministic parity automaton $B=(Q^B,q_0^B,\Sigma,\delta^B,p)$ that, on a word $w$, constructs the levels of the run DAG $G_w$ and the coloring of previous section. Theorem~\ref{thm:coloring} tells us that such an automaton accepts the same language as $A$. 

First, we need some notations. Given a finite set $S$, we note ${\cal P}(S)$ the set of its subsets, and ${\cal OP}(S)$ the set of its totally ordered subsets. So if $(s,<) \in {\cal OP}(S)$ then $s \subseteq S$ and $\mathord{<} \subseteq s \times s$ is a total strict order on $s$. For $e \in s$, we denote by ${\sf Ind}_{(s,<)}(e)$ the position of $e \in s$ among the elements in $s$ for the total strict order $<$, with the convention that the index of the $<$-minimum element is equal to $1$. The deterministic parity automaton $B=(Q^B,q_0^B,\Sigma,\delta^B,p)$ is defined as follows.

\paragraph{{\bf States and initial state}}
The set of states is $Q^B={\cal P}(\overline{Q_d}) \times {\cal OP}(Q_d)$, i.e. a state  of $B$ is a pair $(s,(t,<))$ where $s$ is a set of states outside $Q_d$, and $t$ is an ordered subset of $Q_d$. The ordering reflects the relative index of each state within $t$. The initial state is $q^B_0=(\{q_0\},(\{\},\{\}))$.

\paragraph{{\bf Transition function}}
Let $(s_1,(t_1,<_1))$ be a state in $Q^B$, and $\sigma \in \Sigma$. Then $\delta^B((s_1,(t_1,<_1)))=(s_2,(t_2,<_2))$ where:
		\begin{itemize}
			\item $s_2 = {\sf post}^{\sigma}_{\delta}(s_1) \cap \overline{Q_d}$;
			\item $t_2 =  {\sf post}^{\sigma}_{\delta}(s_1 \cup t_1) \cap Q_d$;
			\item $<_2$ is defined from $<_1$ and ${\sf Ord}$ as follows:
				$\forall q_1,q_2 \in t_2$: $q_1 <_2 q_2$ iff:
			
			  \begin{enumerate}
			  	\item {\bf either}, $\neg\exists q'_1 \in t_1:q_1=\delta(q'_1,\sigma)$, and $\neg\exists q'_2 \in t_1:q_2=\delta(q_2',\sigma)$, and ${\sf Ord}(q_1) < {\sf Ord}(q_2)$,\\
			  	\noindent
			  	i.e. none has a predecessor in $Q_d$, then they are ordered using $\ord$;
				
				\item {\bf or}, $\exists q_1' \in t_1: q_1=\delta(q_1',\sigma)$, and $\neg\exists q'_2 \in t_1:q_2=\delta(q_2',\sigma)$,\\
				\noindent i.e. $q_1$ has a $\sigma$-predecessor in $Q_d$, and $q_2$ not;
				
				\item {\bf or} $\exists q'_1 \in t_1:q_1=\delta(q'_1,\sigma)$, and $\exists q'_2 \in t_1:q_2=\delta(q_2',\sigma)$, and $\min_{<_1} \{ q'_1 \in t_1 \mid q_1=\delta(q'_1,\sigma)\} < \min_{<_1} \{ q'_2 \in t_1 \mid q_2=\delta(q'_2,\sigma) \}$,\\
				\noindent
				i.e. both have a predecessor in $Q_d$, and they are ordered according to the order of their minimal parents.
			\end{enumerate}
		\end{itemize}
		
\paragraph{{\bf Coloring}} To define the coloring of edges in the deterministic automaton, we need to identify the states $q \in t_1$ in a transition $(s_1,(t_1,<_1)) \stackrel{\sigma}{\rightarrow} (s_2,(t_2,<_2))$ whose indices decrease when going from $t_1$ to $t_2$. Those are defined as follows:
		\[{\sf Dec}(t_1)= \{ q_1 \in t_1 \mid {\sf Ind}_{(t_2,<_2)}(\delta(q_1,\sigma)) < {\sf Ind}_{(t_1,<_1)}(q_1) \}.\]
\noindent
Additionally, let $\acc(t_1)=\{ q \mid \exists q' \in t_2 : (q,\sigma,q') \in  \alpha \}$ denote the subset of states in $t_1$ that are the source of an accepting transition.

We assign a color to each transition $(s_1,(t_1,<_1)) \rightarrow^{\sigma} (s_2,(t_2,<_2))$ as follows:

\begin{enumerate}
	\item if $\dec(t_1)=\emptyset$ and $\acc(t_1)\not=\emptyset$, the color is $2 \cdot\min_{q \in \acc(t_1)} \ind_{(t_1,<_1)}(q)$.
	\item if $\dec(t_1)\not=\emptyset$ and $\acc(t_1)=\emptyset$, the color is $2 \cdot\min_{q \in \dec(t_1)} \ind_{(t_1,<_1)}(q)-1$.
	\item if $\dec(t_1)\not=\emptyset$ and $\acc(t_1)\not=\emptyset$, the color is defined as the minimal color among 
		\begin{itemize}
			\item $c_{{\sf odd}}=2 \cdot\min_{q \in \dec(t_1)} \ind_{(t_1,<_1)}(q)-1$, and 
			\item $c_{{\sf even}}=2 \cdot \min_{q \in \acc(t_1)} \ind_{(t_1,<_1)}(q)$.
		\end{itemize}
	\item if $\dec(t_1)=\acc(t_1)=\emptyset$, the color is $2 \cdot |Q_q|+1$.
\end{enumerate}

\begin{figure}[t]
\myspace\myspace\myspace
	\begin{center}
		\begin{tikzpicture}
		[x=2cm,y=7mm,font=\small,initial text=,outer sep=0pt]
		
		\node[state,initial] (1) at (0,0) {$\{1\},[]$};
		\node[state] (2) at (1,1) {$\{1\},[2]$}; 
		\node[state] (3) at (1,-1) {$\{1\},[3]$};   	
		\node[state] (4) at (2,1) {$\{1\},[4< 3]$}; 
		\node[state] (5) at (2,-1) {$\{1\},[4< 2]$}; 
		
		\path[->]
		(1) edge[bend left] node[above]{a} node[below]{7}  (2)
		(2) edge[loop above] node[above]{a} node[below]{2} ()
		(1) edge[bend right] node[above]{7} node[below]{b}  (3)
		(3) edge[loop below] node[above]{2} node[below]{b} ()
		(2) edge node[above]{b} node[below]{4}  (4)
		(4) edge[loop above] node[above]{b} node[below]{4} ()
		(3) edge node[above]{4} node[below]{a}  (5)
		(5) edge[loop below] node[above]{4} node[below]{a} ()
		(4) edge[bend left] node[right]{a} node[left]{3}  (5)
		(5) edge[bend left] node[left]{b} node[right]{3}  (4)   	
		;
		\end{tikzpicture}
		~~~
		\begin{tikzpicture}
		[x=2cm,y=7mm,font=\small,initial text=,outer sep=0pt]
		
		\node[state,initial] (1) at (0,0) {$\{1\},[]$};
		\node[state] (2) at (1,1) {$\{1\},[2]$}; 
		\node[state] (3) at (1,-1) {$\{1\},[3]$};   	
		
		\path[->]
		(1) edge[bend left] node[above]{a} node[below]{3}  (2)
		(2) edge[loop above] node[above]{a} node[below]{2} ()
		(1) edge[bend right] node[above]{3} node[below]{b}  (3)
		(3) edge[loop below] node[above]{2} node[below]{b} ()
		(2) edge[bend left] node[right]{a} node[left]{1}  (3)
		(3) edge[bend left] node[left]{b} node[right]{1}  (2)   	
		;
		\end{tikzpicture}
	\end{center}
	\myspace\myspace
	\caption{\label{fig:DPA-ex} Left: DPA that accepts the  LTL language $\F \G a \lor \F \G b$, edges are decorated with a natural number that specifies its color. Right: A reduced DPA. }
	\myspace
\end{figure}
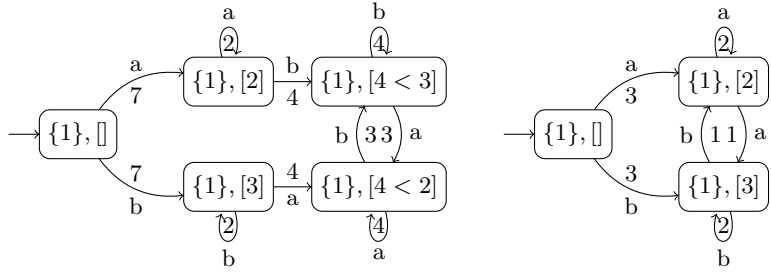

\begin{example}
The DPA of Fig.~\ref{fig:DPA-ex} is the automaton that is obtained by applying the construction LDBA$\rightarrow$DPA defined above to the LDBA of Fig.~\ref{fig:ex-LDBA} that recognizes the LTL language $\F \G a \lor \F \G b$. The figure only shows the reachable states of this construction. As specified in the construction above, states of DPA are labelled with a subset of $\overline{Q_d}$ and a ordered subset of $Q_d$ of the original NBA. As an illustration of the definitions above, let us  explain the color of edges from state $(\{1\},[4,3])$ to itself on letter $b$. When the NBA is in state $1$, $3$ or $4$ and letter $b$ is read, then the next state of the automaton is again  $1$, $3$ or $4$. Note also that there are no runs that are merging in that case. As a consequence, the color that is emitted is even and equal to the index of the smallest state that is the target of an accepting transition. In this case, this is state $3$ and its index is $2$. This is the justification for the color $4$ on the edge. On the other hand, if letter $a$ is read from state  $(\{1\},[4,3])$, then the automaton moves to states $(\{1\},[4,2])$. The state $3$ is mapped to state $4$ and there is a run merging which induces that the color emitted is odd and equal to $3$. This $3$ trumps all the $4$'s that were possibly emitted from state $(\{1\},[4,3])$ before.
\end{example}

  \begin{theorem}
  The language defined by the deterministic parity automaton $B$ is equal to the language defined by the limit deterministic automaton $A$, i.e. $\lang(A)=\lang(B)$.
  \end{theorem}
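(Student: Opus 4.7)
}
The strategy is to show that the deterministic automaton $B$ is, by construction, a step-by-step simulator of the run DAG $G_w$ together with its indexing and coloring from the previous subsection. Once this correspondence is established, the theorem follows immediately from Theorem~\ref{thm:coloring}: a word $w$ is accepted by $B$ iff the minimum color seen infinitely often on its unique run in $B$ is even iff the color summary of $G_w$ is even iff $G_w$ contains an accepting run iff $w \in \lang(A)$.

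The main technical content is a single invariant, proved by induction on the level $i \ge 0$. Let $(s_i,(t_i,<_i))$ be the unique state of $B$ reached after reading $w(0)\dots w(i-1)$. I would prove that (a)~$s_i = \{q \mid (q,i) \in V_i \setminus V^d_i\}$ and $t_i = \{q \mid (q,i) \in V^d_i\}$, and (b)~for all $q,q' \in t_i$, $q <_i q'$ iff $(q,i) \sqsubset_i (q',i)$. The base case $i=0$ is immediate from $q_0^B = (\{q_0\},(\emptyset,\emptyset))$ and $q_0 \in \overline{Q_d}$. For the inductive step, part (a) is just the definition of $V_{i+1}$ together with the splitting by membership in $Q_d$, matching the definition of $s_{i+1},t_{i+1}$ via ${\sf post}^{w(i)}_{\delta}$. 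Part (b) is the crux: I would unfold the three cases in the definition of $<_{i+1}$ and match them against the definition of $\sqsubseteq$ on run prefixes. A vertex in $V^d_{i+1}$ has \emph{no} predecessor in $V^d_i$ exactly when its smallest run prefix first enters $Q_d$ at level $i+1$; two such vertices are then ordered purely by $\ord$, which is case~1. A vertex having a predecessor in $V^d_i$ has a smallest prefix that entered $Q_d$ earlier, so is $\sqsubset_{i+1}$-smaller than any newly entering vertex, which is case~2. If both have predecessors in $V^d_i$, determinism inside $Q_d$ implies the smallest prefixes coincide with the smallest prefixes of their $<_i$-minimum parents extended by $w(i)$; hence the $\sqsubset_{i+1}$-order is inherited from the $<_i$-order of those parents, which is exactly case~3. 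This is the step I expect to cost the most care, because one must invoke the trap property of $Q_d$ and the inductive hypothesis on $<_i$ simultaneously.

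Once the invariant is in place, the colors agree on the nose. Indeed, by (a) and (b), ${\sf Ind}_{(t_i,<_i)}(q)$ equals $\ind_i((q,i))$ for every $q \in t_i$, and the sets $\dec(t_i)$ and $\acc(t_i)$ defined on states of $B$ coincide under the identification $q \mapsto (q,i)$ with the sets $\dec(V^d_i)$ and $\acc(V^d_i)$ defined on the run DAG. Since the four clauses of the coloring rule for $B$ are verbatim the four clauses used to color level $i$ of $G_w$, the color of the $i$-th transition of $B$ on input $w$ equals the color of the $i$-th level in $G_w$.

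The sequence of colors produced by the unique run of $B$ on $w$ therefore equals, position by position, the sequence of colors in the coloring of $G_w$, so the two have the same color summary. Combining this with Theorem~\ref{thm:coloring} yields $w \in \lang(B) \iff w \in \lang(A)$, and hence $\lang(A) = \lang(B)$.
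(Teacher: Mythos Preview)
Your proposal is correct and follows exactly the approach of the paper: the paper's proof simply asserts that ``it is easy to show by induction that the sequence of colors that occur along $G_w$ is equal to the sequence of colors defined by the run of the automaton $B$ on $w$'' and then invokes Theorem~\ref{thm:coloring}, and your invariant (a)--(b) is precisely the inductive statement one needs to make that assertion rigorous. In other words, you have faithfully unpacked the paper's one-line sketch, including the case analysis for the ordering $<_{i+1}$ and the identification of $\dec$, $\acc$, and $\ind$ on both sides.
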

  \begin{proof}
  Let $w \in \Sigma^{\omega}$ and $G_w$ be the run DAG of $A$ on $w$. It is easy to show by induction that the sequence of 
  colors that occur along $G_w$ is equal to the sequence of colors defined by the run of the automaton $B$ on $w$. By Theorem~\ref{thm:coloring}, the language of automaton $B$ is thus equal to the language of automaton $A$. \qed
  \end{proof}
  

\subsection{Complexity Analysis}

\subsubsection{Upper bound}

Let $n = |Q|$ be the size of the LDBA and let $n_d = |Q_d|$ be the size of the accepting component. We can bound the number of different orderings using the series of reciprocals of factorials (with $e$ being Euler's number):

\[
|{\cal OP}(Q_d)| = \sum_{i=0}^{n_d}\frac{n_d!}{(n_d-i)!} \leq  n_d \cdot n_d!  \cdot \sum_{i=0}^{\infty}\frac{1}{i!} = e \cdot n_d \cdot n_d! \in\mathcal O(2^{n\cdot\log n})
\]

\noindent Thus the obtained DPA has ${\cal O}(2^n\cdot2^{n\cdot\log n}) = 2^{\mathcal O(n\cdot\log n)}$ states and ${\cal O}(n)$ colours.

\subsubsection{Lower bound}

We obtain a matching lower bound by strengthening Theorem~8 from \cite{DBLP:conf/fsttcs/Loding99}:


\begin{lemma}
There exists a family $(L_n)_{n \geq 2}$ of languages ($L_n$ over an alphabet of $n$ letters) such that for every $n$ the language $L_n$ can be recognized by a limit-deterministic Büchi automaton with $3n + 2$ states but can not be recognized by a deterministic Parity automaton with less than $n!$ states.
\end{lemma}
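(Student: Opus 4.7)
The plan is to exhibit an explicit language family $(L_n)_{n \geq 2}$ for which the gap between LDBAs and DPAs is witnessed, adapting L\"oding's Theorem~8 construction so that the \emph{upper bound} sits on the limit-deterministic side and his $n!$ lower bound transfers unchanged to the DPA side.

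First, I would fix $L_n$ to be a ``permutation-tracking'' language over a size-$n$ alphabet $\Sigma_n$, close to the one L\"oding uses. The decisive semantic feature is that membership depends on identifying, among the letters that occur infinitely often, a single distinguished letter whose tail behaviour satisfies an extra constraint (for example, eventually only a prescribed subset appears, or a marker is followed by a specific pattern determined by the last-occurrence record). This is exactly the kind of language whose canonical deterministic memory is a permutation of $\Sigma_n$, which is what produces L\"oding's $n!$ lower bound.

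Next I would construct an LDBA $A_n$ whose state set is partitioned into an initial nondeterministic component $\overline{Q_d}$ of $n+1$ states and a deterministic trap $Q_d$ of $2n+1$ states, for a total of $3n+2$. The initial component reads an arbitrary prefix in a single state and, at any point, nondeterministically \emph{guesses} the distinguished letter $i \in \Sigma_n$, branching into one of $n$ entry points of $Q_d$. Inside $Q_d$ I would place a deterministic verifier (two tracks of $n$ states each plus one sink), with the B\"uchi accepting transitions positioned so that the verifier loops through an accepting transition infinitely often exactly when the guess $i$ is consistent with the tail of the word required by $L_n$. Conditions (1)--(3) of the LDBA definition are then immediate by construction, and correctness splits into the standard two directions: every $w \in L_n$ admits a correct guess giving an accepting run, and every accepting run forces the guessed tail property to hold.

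Finally, for the DPA lower bound I would import L\"oding's distinguishability argument with only cosmetic changes: produce $n!$ finite words $w_\sigma$ indexed by permutations $\sigma \in S_n$, together with, for each ordered pair $\sigma \neq \tau$, an infinite suffix $u_{\sigma,\tau}$ such that $w_\sigma u_{\sigma,\tau} \in L_n$ while $w_\tau u_{\sigma,\tau} \notin L_n$. By a Myhill--Nerode-style argument any DPA for $L_n$ must reach pairwise distinct states after reading the $w_\sigma$, giving the $n!$ bound. The main obstacle I anticipate is calibrating the language so that both sides hold simultaneously: the LDBA must actually fit within $3n+2$ states (in particular $Q_d$ must be genuinely deterministic and trap-closed, which constrains where the guess can be made), while at the same time the $n!$ permutations must remain Myhill--Nerode-inequivalent in L\"oding's sense. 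Once the language is fixed to meet both constraints, the remaining verification is routine bookkeeping on transitions and accepting edges.
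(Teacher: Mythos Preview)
Your plan takes a markedly different route from the paper, and one branch of it carries a real risk.

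The paper's proof is two sentences long. For the upper bound it does not build anything: it simply observes that the nondeterministic B\"uchi automaton L\"oding already constructs in the proof of his Theorem~8 \emph{is} limit-deterministic, so the $3n+2$ states are inherited verbatim. For the lower bound it does not touch L\"oding's argument at all. Instead it uses that deterministic parity automata are closed under complementation without blowup: a DPA for $L_n$ with $m<n!$ states yields a DPA (hence a Rabin automaton) of size $m$ for $\overline{L_n}$, hence a Streett automaton of size $m$ for $L_n$, contradicting L\"oding's Theorem~8 as stated. No inspection of how L\"oding proves his bound is needed.

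Your plan instead reconstructs both halves. That is fine for the upper bound, though note the slight mismatch in your bookkeeping (you describe the initial component both as ``a single state'' and as having $n+1$ states). The more serious issue is the lower bound. You propose to ``import L\"oding's distinguishability argument'' as a Myhill--Nerode style separation of $n!$ finite prefixes by $\omega$-suffixes. But L\"oding's Theorem~8 is a lower bound for deterministic \emph{Streett} automata, and his argument need not be---and as far as I can tell is not---a pure right-congruence argument. Indeed, languages of the flavour you sketch (``some guessed letter has a prescribed tail behaviour'') are typically prefix-invariant, so every finite word is right-congruent to every other and the Myhill--Nerode bound is $1$, not $n!$. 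The $n!$ in L\"oding comes from the interaction with the Streett acceptance structure, not from state-distinguishability. If you insist on a direct argument for DPA you would have to either redesign $L_n$ so that it genuinely has $n!$ right-congruence classes (and then re-verify the $3n+2$ LDBA bound for the new language), or reproduce a structure-specific lower-bound argument for parity. The paper's complementation trick sidesteps this entirely: it reduces the DPA bound to L\"oding's Streett bound as a black box, which is both shorter and robust to how that bound was originally proved.
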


\begin{proof}
The proof of Theorem 8 from \cite{DBLP:conf/fsttcs/Loding99} constructs a non-deterministic Büchi automaton of exactly this size and which is in fact limit-deterministic. 

Assume there exists a deterministic Parity automata for $L_n$ with $m < n!$ states. Since parity automata are closed under complementation, we can obtain a parity automaton and hence also a Rabin automaton of size $m$ for $\overline{L_n}$ and thus a Streett automaton of size $m$ for $L_n$, a contradiction to Theorem~8 of \cite{DBLP:conf/fsttcs/Loding99}.\qed
\end{proof}

\begin{corollary}
Every translation from limit-deterministic Büchi automata of size $n$ to deterministic parity yields automata with $2^{\Omega(n \log n)}$ states in the worst case.
\end{corollary}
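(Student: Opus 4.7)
The plan is a short, direct consequence of the preceding Lemma, so the proof proposal is essentially a calculation rather than a new argument. The core idea: the Lemma hands us an infinite family $(L_n)$ realized by an LDBA of linear size $3n+2$ but requiring at least $n!$ states in any DPA, and we simply invoke Stirling to turn this into a $2^{\Omega(n \log n)}$ lower bound expressed in the LDBA size.

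First I would fix the witness family: for each $n \geq 2$, let $A_n$ be the LDBA from the Lemma, with $|A_n| = 3n+2$, so that $\lang(A_n) = L_n$. Let $\mathcal{T}$ be any purported translation from LDBAs to equivalent DPAs, and let $B_n = \mathcal{T}(A_n)$. Since $B_n$ is a DPA for $L_n$, the Lemma forces $|B_n| \geq n!$.

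Next I would convert this into the stated asymptotic form. Set $m := |A_n| = 3n+2$, so $n = (m-2)/3 = \Theta(m)$ and $\log n = \Theta(\log m)$. By Stirling's approximation, $\log_2(n!) = n \log_2 n - n \log_2 e + O(\log n) = \Omega(n \log n)$, hence
\[
|B_n| \geq n! = 2^{\Omega(n \log n)} = 2^{\Omega(m \log m)}.
\]
Since $\mathcal{T}$ was arbitrary, this shows that for every such translation there exists an infinite sequence of input LDBAs of size $m$ on which the output has at least $2^{\Omega(m \log m)}$ states, which is exactly the claim.

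There is really no obstacle: the Lemma does all the combinatorial work, and the Corollary is only a rephrasing in terms of input size together with one application of Stirling. The only mild care needed is to make explicit that the LDBA size grows linearly in $n$ (so that $n \log n$ and $m \log m$ agree up to constants inside the $\Omega$), and that the bound holds for every translation, not just a specific one — both points are immediate from the Lemma's statement.
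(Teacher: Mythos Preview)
Your proposal is correct and matches the paper's approach: the paper states the corollary without proof, treating it as an immediate consequence of the preceding lemma, and your argument (linear LDBA size $3n+2$ together with the $n!$ lower bound and Stirling) is exactly the intended derivation.
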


\section{From LTL to Parity in $2^{2^{\mathcal O(n)}}$}

In \cite{DBLP:conf/cav/SickertEJK16} we present a LTL$\rightarrow$LDBA translation. Given a formula 
$\varphi$ of size $n$, the translation produces an asymptotically optimal LDBA with $2^{2^{\mathcal O(n)}}$ states. 
The straightforward composition of this translation with the single exponential 
LDBA$\rightarrow$DPA translation of the previous section is only guaranteed to be 
triple exponential, while the Safra-Piterman construction produces a DPA of at most doubly 
exponential size. In this section we describe a modified composition that yields a double 
exponential DPA.
To the best of our knowledge this is is the first translation of the whole LTL to deterministic parity automata that is asymptotically optimal and does not use Safra's construction.

The section is divided into two parts. 
In the first part, we explain and illustrate a redundancy occurring in our LDBA$\rightarrow$DPA translation, responsible for the undesired extra exponential.
We also describe an optimization that removes this redundancy when the LDBA satisfies some conditions.
In the second part, we show these conditions are satisfied on the products of the LTL$\rightarrow$LDBA translation, which in turn guarantees a doubly exponential LTL$\rightarrow$DPA procedure. 

\subsection{An improved construction}

We can view the second component of a state of the DPA as a sequence of states of the LDBA, ordered by their indices.
Since there are $2^{2^{\mathcal O(n)}}$ states of the LDBA for an LTL formula of length $n$, the number of such sequences is 
$$2^{2^{\mathcal O(n)}}!=2^{2^{2^{\mathcal O(n)}}}$$
If only the length of the sequences (the maximum index) were bounded by $2^n$, the number of such sequences would be smaller than the number of functions $2^n\to 2^{2^{\mathcal O(n)}}$ which is
$$(2^{2^{\mathcal O(n)}})^{2^n}=2^{2^{\mathcal O(n)}\cdot 2^n}=2^{2^{\mathcal O(n)}}$$

Fix an LDBA with set of states $Q$. Assume the existence of an {\em oracle}: a list of statements of the form $\lang(q) \subseteq \bigcup_{q' \in Q_q} \lang(q')$ where $q \in Q$ and $Q_q \subseteq Q$.  We use the oracle to define a mapping that associates to each run DAG $G_w$ a ``reduced DAG'' $G_w^*$, defined as the result of iteratively performing the following four-step operation:
\begin{itemize}
\item Find the first $V_i$ in the current DAG such that the sequence $(v_1,i)\sqsubset(v_2,i)\sqsubset\cdots\sqsubset(v_{n_i},i)$ of vertices of 
$V_i^d$ contains a vertex $(v_k,i)$ for which the oracle ensures
\begin{equation}
\lang(v_k)\subseteq\bigcup_{j<k}\lang(v_j) \tag{$*$}\label{eq:lang-based}
\end{equation}
\noindent We call $(v_k, i)$ a {\em redundant vertex}.
\item Remove $(v_k, i)$ from the sequence, and otherwise keep the ordering $\sqsubseteq_i$ unchanged (thus decreasing the index of vertices $(v,\ell)$ with $\ell>k$). 
\item Redirect transitions leading  from vertices in $V_{i-1}$ to $(v_k, i)$ so that they lead to the smallest vertex $(v_1, i)$ of $V_i$. 
\item Remove any vertices (if any) that are no longer reachable from vertices of $V_1$. 
\end{itemize}
We define the color summary of $G_w^*$ in exactly the same way as the color summary of $G_w$.
The DAG $G_w^*$ satisfies the following crucial property, whose proof can be found in Appendix~\ref{app:merging}:
\begin{proposition}\label{prop:merging}
	The color summary of the run DAG $G_w^*$ is even if and only if there is an accepting run in $G_w$.
\end{proposition}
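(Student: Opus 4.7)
The plan is to deduce the proposition from Theorem~\ref{thm:coloring} by establishing (a) that Theorem~\ref{thm:coloring} itself applies to $G_w^*$, and (b) that $G_w^*$ has an accepting run iff $G_w$ does. Claim (a) follows because the structural hypotheses used in the proof of Theorem~\ref{thm:coloring} — determinism within $Q_d$ and eventual stabilization of indices along any path — are preserved by each reduction step: redirection leaves every vertex with a unique $Q_d$-successor, and reindexing after removal only closes gaps, so indices remain non-increasing along paths.

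For the forward direction of (b), I plan to show that the $\sqsubseteq$-smallest accepting run $\rho$ in $G_w$ is itself an accepting run of $G_w^*$. The key point is that $\rho$ cannot pass through any vertex $(v_k,i)$ that is ever removed during the reduction: each such removal is witnessed by an oracle assertion $\lang(v_k)\subseteq\bigcup_{j<k}\lang(v_j)$ for some currently smaller vertices, which are a subset of $G_w$'s original smaller vertices, so the same inclusion holds for the $\sqsubset_i$-ordering in $G_w$. If $\rho$ passed through $(v_k,i)$, then $w[i..]$ would be accepted from some $v_j$ with $v_j\sqsubset_i v_k$ in $G_w$; combining a witnessing prefix ending at $(v_j,i)$ (which is $\sqsubseteq$-smaller than $\rho[0..i]$ by the definition of $\sqsubset_i$, and strictly so because $\ord$ is injective on $Q_d$) with an accepting LDBA continuation from $v_j$ on $w[i..]$ yields a strictly $\sqsubseteq$-smaller accepting run, contradicting minimality. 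Since $\rho$ avoids every removed vertex, all of its edges survive intact as original (non-redirected) edges in $G_w^*$, so $\rho$ is also an accepting run of $G_w^*$.

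The backward direction is the main obstacle, since a run of $G_w^*$ may use redirected edges at infinitely many layers, blocking a direct inductive lift through the reduction steps. I plan to argue instead from the color structure: if the color summary of $G_w^*$ equals $2k$, then from some layer onward no $\dec$ event at index $\le k$ occurs, and the stabilization argument underlying Theorem~\ref{thm:coloring} identifies a track at stable index $k$ along which infinitely many $\acc$-events occur. By the definition of $\acc$, each such event is witnessed by a genuine LDBA accepting transition whose target is still present in $G_w^*$; moreover, for $k\ge 2$ any redirection on the track would send it to index $1$, producing an odd color of at most $2k-1$ and contradicting the assumed color summary, while for $k=1$ the unique $Q_d$-successor of the index-$1$ vertex is necessarily the min-parented vertex of the next layer and hence coincides with the index-$1$ vertex there. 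The track is therefore an accepting LDBA run, and prepending any $G_w$-prefix ending at its starting vertex gives the desired accepting run of $G_w$.
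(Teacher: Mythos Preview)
Your proposal is correct and follows essentially the same strategy as the paper. Both directions match: for the ``$\Leftarrow$'' direction, you and the paper alike show that the $\sqsubseteq$-smallest accepting run of $G_w$ survives in $G_w^*$ via the same oracle-based contradiction, and then invoke the index-stabilization argument of Theorem~\ref{thm:coloring} on $G_w^*$; for the ``$\Rightarrow$'' direction, both identify the stable-index-$k$ track produced by Theorem~\ref{thm:coloring} in $G_w^*$ and argue it is a genuine LDBA run in $G_w$.

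The only presentational differences are minor. Where the paper dispatches the ``$\Rightarrow$'' direction with the single observation that ``runs only merge into smaller ones'' (so the smallest run prefix in $G_w^*$ never uses a redirected edge and is already a $G_w$-prefix), you spell this out as a case split on $k=1$ versus $k\ge 2$ and then prepend an arbitrary $G_w$-prefix rather than arguing that the $G_w^*$-prefix itself is real. Both routes are valid; the paper's is terser, yours is more explicit about why redirected edges cannot appear on the track. Your justification of claim~(a) --- that determinism and non-increasing indices survive each reduction step --- is at the same informal level as the paper's bare ``verbatim'', so neither proof is more rigorous than the other on that point.
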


The mapping on DAGs induces a reduced DPA as follows. The states are the 
pairs $(s, (t, <))$ such that $(t, <)$ does not contain redundant vertices.
There is a transition $(s_1, (t_1, <)) \stackrel{a}{\rightarrow} 
(s_2, (t_2, <))$ with color $c$ if{}f there is a word $w$ and an index $i$ such that  $(s_1, (t_1, <))$ and $(s_2, (t_2, <))$ correspond to the $i$-th and $(i+1)$-th levels of $G_w^*$, and $a$ and $c$ are the letter and color of the step between these levels in $G_w^*$. Observe that the set of transitions is independent of the words chosen to define them.

The equivalence between the initial DPA $\A$ and the reduced DPA $\A_r$ follows immediately from Proposition \ref{prop:merging}: $\A$ accepts $w$ if{}f $G_w$ contains an accepting run if{}f the color summary of $G_w^*$ is even if{}f $\A_r$ accepts $w$.

\begin{example}
	Consider the LDBA of Fig.~\ref{fig:ex-LDBA} and an oracle given by $\lang(4)=\emptyset$, ensuring $\lang(4)\subseteq\bigcup_{i\in I}\lang(i)$ for any $I\subseteq Q$. Then $4$ is always redundant and merged, removing the two rightmost states of the DPA of Fig.~\ref{fig:DPA-ex} (left), resulting in the DPA of Fig.~\ref{fig:DPA-ex} (right). 
	However, for the sake of technical convenience, we shall refrain from removing a redundant vertex when it is the smallest one (with index $1$).
\end{example}

Since the construction of the reduced DPA is parametrized by an oracle,
the obvious question is how to obtain an oracle that does not involve applying an expensive language inclusion test. Let us give a first example in which an oracle can be easily obtained: 

\begin{example}
	Consider an LDBA where each state $v=\{s_1,\ldots,s_k\}$ arose from some powerset construction on an NBA in such a way that $\lang(\{s_1,\ldots,s_k\})=\lang(s_1)\cup\cdots\lang(s_k)$. An oracle can, for instance, allow us to merge whenever $v_k\subseteq\bigcup_{j<k}v_j$, which is a sound syntactic approximation of language inclusion. 
	This motivates the following formal generalization.
\end{example}

Let $\mathcal L_B=\{L_i\mid i\in B\}$ be a finite set of languages, called \emph{base} languages.
We call $\mathcal L_C:=\{\bigcup\mathcal L\mid\mathcal L\subseteq\mathcal L_B\}$ the join-semilattice of \emph{composed} languages.
We shall assume an LDBA with some $\mathcal L_B$ such that $\lang(q)\in\mathcal L_C$ for every state $q$.
We say that such an LDBA \emph{has a base} $\mathcal L_B$.
In other words, every state recognizes a union of some base languages.
(Note that every automaton has a base of at most linear size.)
Whenever we have states $v_j$ recognizing $\bigcup_{i\in I_j}L_i$ with $I_j\subseteq B$ for every $j$, the oracle allows us to merge vertices $v_k$ satisfying $I_k\subseteq \bigcup_{j<k}I_j$.
Intuitively, the oracle declares a vertex redundant  whenever the simple syntactic check on the indices allows for that.

Let $V_1=\bigcup_{i\in I_1}L_i,\cdots V_j=\bigcup_{i\in I_j}L_i$ be a sequence of languages of $\mathcal L_C$ where the reduction has been applied and there are no more redundant vertices. The maximum length of such a sequence
is given already by the base $\mathcal L_B$ and we denote it $\width(\mathcal L_B)$.
\begin{lemma}
	For any $\mathcal L_B$, we have $\width(\mathcal L_B)\leq|\mathcal L_B|+1$.
\end{lemma}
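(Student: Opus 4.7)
The plan is a short counting argument on the index sets $I_1, \ldots, I_j \subseteq B$ of the reduced sequence. After exhaustive application of the oracle-based reduction, the only vertex that can survive while still satisfying the formal redundancy condition $I_k \subseteq \bigcup_{\ell < k} I_\ell$ is $V_1$, thanks to the convention stated just above the lemma that the smallest vertex is never dropped. So I would start by recording the consequence: for every $k$ with $2 \leq k \leq j$ one has $I_k \not\subseteq \bigcup_{\ell < k} I_\ell$, and I would pick a witness $i_k \in I_k \setminus \bigcup_{\ell < k} I_\ell$.

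The second step is to observe that the witnesses $i_2, \ldots, i_j$ are pairwise distinct. Indeed, for $2 \leq k < k' \leq j$ one has $i_k \in I_k \subseteq \bigcup_{\ell < k'} I_\ell$ whereas $i_{k'} \notin \bigcup_{\ell < k'} I_\ell$ by the choice of $i_{k'}$, so $i_k \neq i_{k'}$. Hence $\{i_2, \ldots, i_j\}$ is a set of $j-1$ pairwise distinct elements of the base-index set $B$, giving $j - 1 \leq |B| = |\mathcal L_B|$, i.e.\ $j \leq |\mathcal L_B| + 1$, which is exactly the claimed bound.

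There is no real obstacle here: the whole argument is a repackaging of the fact that a strictly $\subsetneq$-increasing chain $\bigcup_{\ell \leq 1} I_\ell \subsetneq \bigcup_{\ell \leq 2} I_\ell \subsetneq \cdots \subsetneq \bigcup_{\ell \leq j} I_\ell$ in the Boolean lattice $2^B$ has length at most $|B|+1$. The only mild point worth flagging in the write-up is the additive constant $1$: it comes precisely from the exemption of $V_1$, which need not contribute a fresh base index of its own, while every later surviving vertex must.
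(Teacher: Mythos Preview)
Your argument is correct and is essentially the same as the paper's: both pick, for each $k\geq 2$, a witness $i_k\in I_k\setminus\bigcup_{\ell<k}I_\ell$ and observe that these witnesses are pairwise distinct, yielding an injection of $\{V_2,\ldots,V_j\}$ into $B$. The paper states this more tersely as ``an injective mapping of languages in the sequence (except for $V_1$) into $B$,'' but the content is identical.
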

\begin{proof}
	We provide an injective mapping of languages in the sequence (except for $V_1$) into $B$.	
	Since $I_2\not\subseteq I_1$, there is some $i\in I_2\setminus I_1$ and we map $V_2$ to this $i$.
	In general, since $I_k\not\subseteq \bigcup_{j=1}^{k-1}I_j$, we also have $i\in I_k\setminus \bigcup_{j=1}^{k-1}I_j$ and we map $V_k$ to this $i$.\qed
\end{proof}

On the one hand, the transformation of LDBA to DPA without the reduction yields $2^{\mathcal O(|Q|\cdot\log|Q|)}$ states.
On the other hand, we can now show that the second component of reduced LDBA with a base can be exponentially smaller. Further, let us assume the LDBA is \emph{initial-deterministic}, meaning that $\delta\cap(\overline{Q_d}\times\Sigma\times\overline{Q_d})$ is deterministic, thus not resulting in blowup in the first component.

\begin{corollary}\label{cor:base}
	For every initial-deterministic LDBA with base of size $m$, there is an equivalent DPA with $2^{\mathcal O(m^2)}$ states.
\end{corollary}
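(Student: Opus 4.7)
The plan is to bound the two components of a reachable state $(s, (t, <))$ of the reduced DPA separately, and then multiply.

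For the first component, I would first observe by induction that $|s| \le 1$ throughout. The initial DPA state has $s = \{q_0\}$, a singleton, and the update rule $s_2 = {\sf post}^{\sigma}_{\delta}(s_1) \cap \overline{Q_d}$ together with initial-determinism (which says $\delta$ restricted to $\overline{Q_d} \times \Sigma \times \overline{Q_d}$ is functional) maps a singleton $s_1$ to a set $s_2$ of size at most one. Hence the first component takes at most $|\overline{Q_d}| + 1$ distinct values.

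For the second component, the reduction maintains the invariant that $(t, <)$ contains no redundant vertices, so by the preceding lemma the sequence length is at most $\width(\mathcal{L}_B) \le m + 1$. To turn this into a $2^{\mathcal{O}(m^2)}$ bound, I need to know that $|Q_d|$ and $|\overline{Q_d}|$ are themselves at most $2^{\mathcal{O}(m)}$. For this I would argue that without loss of generality the LDBA is language-minimal on each component: since every state's language lies in $\mathcal{L}_C$, which has cardinality at most $2^m$, and since both $Q_d$ (by the LDBA definition) and $\overline{Q_d}$ (by initial-determinism) are traversed by a deterministic sub-transition relation, any two language-equivalent states are bisimilar and can be merged without affecting the recognized language, the base, or (initial-)determinism. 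After this quotient, $|Q_d|, |\overline{Q_d}| \le 2^m$.

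Putting the two bounds together, the first component contributes at most $2^m + 1$ choices, while the second contributes at most $\sum_{k=0}^{m+1} (2^m)^k \le (m+2)(2^m)^{m+1} = 2^{\mathcal{O}(m^2)}$ choices (ordered sequences of length at most $m+1$ drawn from a set of size at most $2^m$). The product is $2^{\mathcal{O}(m)} \cdot 2^{\mathcal{O}(m^2)} = 2^{\mathcal{O}(m^2)}$, as required.

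The main obstacle is the without-loss-of-generality step: one must verify that quotienting by language-equivalence of states is sound, meaning it preserves the recognized language, preserves initial-determinism, and is compatible with the oracle. The last point is essentially free because the oracle is stated purely in terms of the languages $\lang(q)$ and the base $\mathcal{L}_B$, both of which are invariant under merging equivalent states. The first two points reduce to the classical observation that in a deterministic (or deterministic-in-the-relevant-component) automaton, language-equivalent states induce a bisimulation.
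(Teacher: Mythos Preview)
Your argument is essentially the paper's own proof, spelled out in more detail: the paper also bounds the first component by $|\mathcal L_C|=2^m$ via initial-determinism, bounds the length of the ordered sequence by $\mathcal O(\width(\mathcal L_B))=\mathcal O(m)$, and uses that the LDBA has at most $2^m$ non-equivalent states to count sequences, arriving at $|\mathcal L_C|\cdot|\mathcal L_C|^{\mathcal O(\width(\mathcal L_B))}=2^{\mathcal O(m^2)}$. The paper simply asserts the ``at most $2^m$ non-equivalent states'' step without further justification, so your elaboration of the quotienting is more, not less, than what is given there.

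One caveat on the obstacle you flagged: the claim that language-equivalent states are \emph{bisimilar} is not quite right for $\overline{Q_d}$. The language of $q\in\overline{Q_d}$ depends on the full transition relation, including the nondeterministic jumps into $Q_d$, not only on the deterministic sub-relation $\delta\cap(\overline{Q_d}\times\Sigma\times\overline{Q_d})$; hence $\lang(q_1)=\lang(q_2)$ need not force the $\overline{Q_d}$-successors of $q_1$ and $q_2$ to be language-equivalent. What does go through is the standard redirection argument: if $\lang(q_1)=\lang(q_2)$, reroute every incoming edge of $q_2$ to $q_1$ (keeping its acceptance mark) and delete $q_2$. This preserves all surviving state languages, preserves the base, and preserves (initial-)determinism because determinism of the relevant sub-relation guarantees no conflicting edge already targets $q_1$. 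Iterating gives $|Q_d|,|\overline{Q_d}|\le 2^m$, which is all your counting needs.
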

\begin{proof}
	The number of composed languages is $\mathcal L_C=2^{m}$.
	Therefore, the LDBA has at most $2^m$ (non-equivalent) states.
	Hence the construction produces at most
	$$|\mathcal L_C|\cdot|\mathcal L_C|^{\mathcal O(\width(\mathcal L_B))}=2^m\cdot(2^m)^{\mathcal O(m)}=2^{\mathcal O(m^2)}$$
	states since the LDBA is initial-deterministic, causing no blowup in the first component. \qed
\end{proof}

\subsection{Bases for LDBAs Obtained from LTL Formulas}

We prove that the width for LDBA arising from the LTL transformation is only singly exponential in the formula size.
To this end, we need to recall a property of the LTL$\rightarrow$LDBA translation of \cite{DBLP:conf/cav/SickertEJK16}.
Since partial evaluation of formulas plays a major role in the translation, we introduce the following definition.
Given an LTL formula $\varphi$ and sets $T$ and $F$ of LTL formulas, let $\varphi[T,F]$ denote the result of substituting $\true$ (true) for each occurrence of a formula of $T$ in $\varphi$, and similarly $\false$ (false) for formulas of $F$.
The following property of the translation is proven in Appendix~\ref{app:ltl}.

\begin{proposition}\label{prop:translation}
	For every LTL formula $\varphi$, every state $s$ of the LDBA of \cite{DBLP:conf/cav/SickertEJK16} is labelled by an LTL formula $\mathit{label}(s)$ such that (i) $\lang(s)=\lang(\mathit{label}(s))$ and (ii) $\mathit{label}(s)$ is a Boolean combination of subformulas of $\varphi[T_s, F_s]$ for some $T_s$ and $F_s$.
	Moreover, the LDBA is initial-deterministic.
\end{proposition}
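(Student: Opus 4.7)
The plan is to recall the LDBA construction of \cite{DBLP:conf/cav/SickertEJK16} and define $\mathit{label}(s)$ separately for its two components: a nondeterministic ``master'' initial part whose states track $\aft(\varphi, u)$ for finite prefixes $u$, and a deterministic accepting part indexed by a guess $\setG$ of those $\G$-subformulas of $\varphi$ that should hold from some point on.

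For a state $s$ in the initial component I would let $u$ be any finite word leading to $s$ and set $\mathit{label}(s) := \aft(\varphi, u)$, with $T_s = F_s = \emptyset$. Two standard invariants of $\aft$ suffice. First, $\aft(\varphi, u)$ is a positive Boolean combination of subformulas of $\varphi$, by induction on $|u|$ using the syntactic definition of $\aft$; this gives (ii). Second, $w \models \aft(\varphi, u)$ iff $uw \models \varphi$, which gives $\lang(s) = \lang(\mathit{label}(s))$, i.e.\ (i). Because $\aft$ is a function of the current formula and the next letter, the unique successor on each letter inside the initial component is determined, establishing initial-determinism.

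For a state $s$ in the accepting component associated with a guess $\setG$, I would set $T_s := \setG$ and $F_s$ equal to the set of $\G$-subformulas of $\varphi$ guessed \emph{not} to hold almost always (the complement of $\setG$ among the $\G$-subformulas). In \cite{DBLP:conf/cav/SickertEJK16} such states are represented as tuples whose components are each obtained by applying a $\setG$-relativized variant $\aftg$ of $\aft$ to subformulas of the partially evaluated $\varphi[T_s, F_s]$ (one component tracks the master formula under the assumption $\setG$, one component per $\G\psi \in \setG$ checks the eventual satisfaction of $\psi$). I would take $\mathit{label}(s)$ to be the conjunction of these components. The language equivalence $\lang(s) = \lang(\mathit{label}(s))$ is then precisely the soundness and completeness theorem of the referenced paper, and the subformula structure (ii) follows by induction on the word read since entering the accepting component from two routine facts about $\aftg$: it preserves the property of being a Boolean combination of subformulas, and it commutes with the substitution $[T_s, F_s]$ up to logical equivalence.

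The main obstacle will not be mathematical but notational: carefully mapping the concrete representation of accepting-component states in \cite{DBLP:conf/cav/SickertEJK16} onto a single label, and verifying that every rewrite step employed by the transition function keeps the label inside the class of Boolean combinations of subformulas of $\varphi[T_s, F_s]$. No new ideas beyond those already implicit in that paper are needed; the contribution of this proposition is the explicit identification of the witnesses $T_s$ and $F_s$ together with the structural bound on $\mathit{label}(s)$, which is precisely what the following section needs in order to bound the width of the associated base.
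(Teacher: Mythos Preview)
Your treatment of the initial component is correct and matches the paper: take $\mathit{label}(s)=\aft(\varphi,u)$ with $T_s=F_s=\emptyset$ and invoke the two standard properties of $\aft$. (One quibble: you call this part ``nondeterministic'', but its internal transitions are completely determined by $\aft$; the only nondeterminism of the LDBA lies in the $\epsilon$-jumps to the accepting side, which is exactly what initial-determinism asserts.) Part~(ii) for the accepting component is also fine along the lines you indicate.

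For part~(i) on the accepting component there is a genuine gap. The monitors for the formulas $\G(\psi[\setG])$ in \cite{DBLP:conf/cav/SickertEJK16} are breakpoint automata whose states are \emph{pairs} $(\xi_1,\xi_2)$, not single formulas produced by a relativised $\aftg$; an accepting state of $\A_\setG$ has the form $\big(\varphi'[\setG],(\xi_{11},\xi_{21}),\ldots,(\xi_{1n},\xi_{2n})\big)$, and the intended label is the conjunction of all these entries together with the $\G(\psi_i[\setG])$. The step you need is that the $\G\psi'$-monitor accepts $w$ from an \emph{arbitrary} reachable state $(\xi_1,\xi_2)$ if and only if $w\models\G\psi'\wedge\xi_1\wedge\xi_2$. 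This is \emph{not} supplied by the soundness/completeness theorem of \cite{DBLP:conf/cav/SickertEJK16}: Lemma~5 there characterises only the language from the \emph{initial} monitor state $(\psi',\true)$, and Theorem~1 concerns the whole automaton from its initial state. The paper therefore has to prove a new generalisation of that Lemma~5, resting on the invariant that every reachable $(\xi_1,\xi_2)$ satisfies $\xi_1\wedge\xi_2\equiv\xi\wedge\psi'$ for some $\xi$, which is what makes the breakpoint reset semantically transparent. Your proposal misattributes this step to \cite{DBLP:conf/cav/SickertEJK16} and so leaves part~(i) unproved for accepting-component states beyond those immediately reached by an $\epsilon$-jump.
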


 As a consequence, we can bound the corresponding base:

\begin{corollary}\label{cor:ltl}
	For every LTL formula $\varphi$, the LDBA of \cite{DBLP:conf/cav/SickertEJK16} for $\varphi$ has a base of size $2^{\mathcal O{(|\varphi|)}}$.
\end{corollary}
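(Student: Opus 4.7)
The plan is to exhibit an explicit base $\mathcal{L}_B$ of size $2^{\mathcal{O}(|\varphi|)}$ whose unions cover the languages of all LDBA states, invoking Proposition~\ref{prop:translation}.

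First I would observe two elementary counting facts. For any subsets $T, F$ of the subformulas of $\varphi$, the formula $\varphi[T, F]$ contains at most $|\varphi|$ distinct subformulas, since the substitutions $\psi \mapsto \true$ and $\psi \mapsto \false$ do not introduce any new syntactic material beyond what already sits in $\varphi$ (plus $\true$ and $\false$ themselves). Moreover, the number of disjoint pairs $(T, F)$ of subsets of subformulas of $\varphi$ is at most $3^{|\varphi|}$, since each subformula is assigned to $T$, to $F$, or to neither.

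Next I would take as base $\mathcal{L}_B$ the languages of \emph{conjunctive atoms}: for each disjoint pair $(T, F)$ and each subset $V$ of the subformulas of $\varphi[T, F]$, form
\[\chi_{T, F, V} \;:=\; \bigwedge_{\psi \in V}\psi \;\land\; \bigwedge_{\psi \notin V}\lnot\psi,\]
and put $\lang(\chi_{T, F, V})$ into $\mathcal{L}_B$. The total number of such triples is bounded by $3^{|\varphi|} \cdot 2^{|\varphi|} = 6^{|\varphi|}$, so $|\mathcal{L}_B| \leq 2^{\mathcal{O}(|\varphi|)}$.

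Finally I would verify $\lang(s) \in \mathcal{L}_C$ for every state $s$. By Proposition~\ref{prop:translation}(i)--(ii), $\lang(s) = \lang(\mathit{label}(s))$ and $\mathit{label}(s)$ is a Boolean combination of subformulas of $\varphi[T_s, F_s]$. Converting $\mathit{label}(s)$ to disjunctive normal form over those atomic subformulas, and then using $\psi \equiv (\psi \land \xi) \lor (\psi \land \lnot\xi)$ to extend each clause into one that assigns a polarity to every subformula of $\varphi[T_s, F_s]$, realises $\lang(\mathit{label}(s))$ as a finite union of languages $\lang(\chi_{T_s, F_s, V})$, all of which lie in $\mathcal{L}_B$; hence $\lang(s) \in \mathcal{L}_C$. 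The only delicate point is the bookkeeping step establishing that a single family of atoms $\chi_{T, F, V}$ simultaneously covers all states $s$, i.e.\ that the ``local'' atoms for $s$ already belong to the ``global'' family indexed by all $(T, F, V)$; the DNF manipulation and the counting are routine.
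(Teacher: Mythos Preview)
Your proposal is correct and follows essentially the same route as the paper: invoke Proposition~\ref{prop:translation} to see that each state's label is a Boolean combination of at most $\mathcal O(|\varphi|)$ atoms (the subformulas of $\varphi[T_s,F_s]$), rewrite in DNF so that the language becomes a union of conjunctions, and then bound the total number of conjunctions by multiplying the $2^{\mathcal O(|\varphi|)}$ choices of $(T,F)$ by the $2^{\mathcal O(|\varphi|)}$ conjunctions over each fixed atom set. Your version is simply more explicit---you use full min-terms $\chi_{T,F,V}$ and the concrete bound $3^{|\varphi|}\cdot 2^{|\varphi|}$, whereas the paper leaves the conjuncts as arbitrary DNF clauses and writes $2^{\mathcal O(|\varphi|)}\cdot 2^{\mathcal O(|\varphi|)}$---but the structure of the argument is identical.
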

\begin{proof}
Firstly, we focus on states using the same $\varphi[T_s, F_s]$.
The language of each state can be defined by a Boolean formula over $\mathcal O (|\varphi|)$ atoms.
Since every Boolean formula can be expressed in the disjunctive normal form, its language is a union of the conjuncts.
The conjunctions thus form a base for these states.
There are exponentially many different conjunction in the number of atoms.
Hence the base is of singly exponential size $2^{\mathcal O(|\varphi|)}$ as well.

Secondly, observe that there are only $2^{\mathcal O(|\varphi|)}$ different formulas $\varphi[T_s, F_s]$ and thus only $2^{\mathcal O(|\varphi|)}$ different sets of atoms.	
Altogether, the size is bounded by 
\begin{equation*}
2^{\mathcal O(|\varphi|)}\cdot 2^{\mathcal O (|\varphi|)}= 2^{\mathcal O (|\varphi|)} \qedhere
\end{equation*}
\end{proof}

\begin{theorem}
	For every LTL formula $\varphi$, there is a DPA with $2^{2^{\mathcal O(|\varphi|)}}$ states.
\end{theorem}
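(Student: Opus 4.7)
The plan is simply to chain together the ingredients that the previous subsection has already prepared: the LTL$\rightarrow$LDBA translation of \cite{DBLP:conf/cav/SickertEJK16} instantiated with the base given by Corollary~\ref{cor:ltl}, followed by the reduced LDBA$\rightarrow$DPA construction whose size bound is Corollary~\ref{cor:base}. Since the theorem is really the composition of the two corollaries, no new technical ideas are needed; what is left is to verify that the hypotheses of Corollary~\ref{cor:base} are met and to do the exponent arithmetic.

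First I would apply the translation of \cite{DBLP:conf/cav/SickertEJK16} to $\varphi$, obtaining an LDBA $A_\varphi$. By Proposition~\ref{prop:translation}, $A_\varphi$ is initial-deterministic, so it satisfies the standing assumption of Corollary~\ref{cor:base}. By Corollary~\ref{cor:ltl}, $A_\varphi$ has a base $\mathcal L_B$ of size $m = 2^{\mathcal O(|\varphi|)}$. Consequently we are entitled to apply Corollary~\ref{cor:base}, which produces an equivalent DPA whose number of states is bounded by $2^{\mathcal O(m^2)}$.

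It remains to simplify $2^{\mathcal O(m^2)}$ when $m = 2^{\mathcal O(|\varphi|)}$. Squaring doubles the exponent, so $m^2 = 2^{\mathcal O(|\varphi|)}$ (absorbing the factor $2$ into the $\mathcal O$-notation), and therefore $2^{\mathcal O(m^2)} = 2^{2^{\mathcal O(|\varphi|)}}$, which is exactly the claimed bound. I do not expect any real obstacle: Proposition~\ref{prop:translation} and Corollaries~\ref{cor:ltl} and~\ref{cor:base} have been stated precisely so that this last assembly step is routine, and the only place one has to be a little careful is confirming that $m^2$ does not push the outer exponent up by one level — which it does not, because squaring an exponential merely multiplies its exponent by a constant.
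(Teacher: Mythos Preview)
Your proposal is correct and follows exactly the same route as the paper: invoke Proposition~\ref{prop:translation} for initial-determinism, Corollary~\ref{cor:ltl} for the singly exponential base, then Corollary~\ref{cor:base} and the arithmetic $\bigl(2^{\mathcal O(|\varphi|)}\bigr)^2 = 2^{\mathcal O(|\varphi|)}$ to conclude. The paper's own proof is the same argument compressed into two sentences.
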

\begin{proof}
The LDBA for $\varphi$ has base of singly exponential size $2^{\mathcal O(|\varphi|)}$ by Corollary~\ref{cor:ltl} and is initial-deterministic by Proposition~\ref{prop:translation}. Therefore, by Corollary \ref{cor:base}, the size of the DPA is doubly exponential, in fact
\begin{equation*}
2^{{(2^{\mathcal O(|\varphi|)})}^2}=2^{2^{\mathcal O(|\varphi|)}} \qedhere
\end{equation*}
\end{proof}
This matches the lower bound $2^{2^{\Omega(n)}}$ by \cite{DBLP:conf/mochart/KupfermanR10} as well as the upper bound by the Safra-Piterman approach. Finally, note that while the breakpoint constructions in \cite{DBLP:conf/cav/SickertEJK16} is analogous to Safra's vertical merging, the merging introduced here is analogous to Safra's horizontal merging.

\newcommand{\veeton}{\ensuremath{\bigvee_{i=1}^n}}
\newcommand{\wedgeton}{\ensuremath{\bigwedge_{i=1}^n}}

\section{Experimental Evaluation}

We evaluate the performance of our construction on several datasets taken from 
\cite{DBLP:conf/lpar/BlahoudekKS13,DeWulf:2008vt,DBLP:conf/cav/SickertEJK16} and several Temporal 
Logic Synthesis Format (TLSF) specifications \cite{DBLP:journals/corr/JacobsBBK0KKLNP16} of
the SyntComp 2016 competition. 

We use the size of the constructed deterministic automaton as an indicator 
for the overall performance of the synthesis procedure. In \cite{DBLP:conf/charme/SebastianiT03} 
it is argued that the degree of determinism of the automaton is a better predictor for performance
in model-checking problems; however, this parameter is not applicable for synthesis problems, which
require deterministic automata. 

We compare two versions of our implementation (with and without optimizations, see below)
with the algorithms of Spot \cite{duret.16.atva2}. Each tool is given 64GB
of memory and 10 minutes. Increasing time to 10 hours
does not change the results. More precisely, we compare the following three setups:

\medskip \noindent \textbf{S.} (\texttt{ltl2tgba}, 2.1.1) - Spot \cite{duret.16.atva2} implements a
version of the Safra-Piterman determinization procedure \cite{DBLP:journals/fuin/Redziejowski12}
with several optimizations.

\medskip \noindent \textbf{L2P and L2P$'$.}  (\texttt{ltl2dpa}, 1.0.0) - L2P is the construction of this paper,
available at \url{www7.in.tum.de/~sickert/projects/ltl2dpa}. 
L2P$'$ adds two optimizations. First, the tool translates both the formula and its negation to 
DPAs $A_1, A_2$, complements $A_2$ to yield $\overline{A}_2$, and picks the smaller of $A_1, A_2$. 
Further, we apply the simplification routines of Spot (\texttt{ltlfilt} and \texttt{autfilt}, respectively). 

\begin{figure}[t]
	\vspace*{-2cm}

%
%
\subfloat[Parametrised Formulas]{\begin{tikzpicture}[scale=0.69]
\begin{loglogaxis}[xlabel={L2P$_{\parallel}^*$ (states)}, ylabel={S (states)}, xmax=10000, ymax=10000, ymin=0, xmin=0, enlargelimits=false, legend style={at={(0.98,0.02)},anchor=south east}]
\addplot[mark=diamond, smooth] table[x=L2D-PAR++,y=S]{GR1.dat};
\addplot[mark=square, smooth] table[x=L2D-PAR++,y=S]{R.dat};
\addplot[mark=triangle, smooth] table[x=L2D-PAR++,y=S]{F.dat};
\addplot[mark=o, smooth] table[x=L2D-PAR++,y=S]{theta.dat};
\addplot[mark=x] table[x=L2D-PAR++,y=S]{C1.dat};
\addplot[mark=x] table[x=L2D-PAR++,y=S]{C2.dat};
\addplot[mark=x] table[x=L2D-PAR++,y=S]{E.dat};
\addplot[mark=x] table[x=L2D-PAR++,y=S]{Q.dat};
\addplot[mark=x] table[x=L2D-PAR++,y=S]{S.dat};
\addplot[mark=x] table[x=L2D-PAR++,y=S]{U1.dat};
\addplot[mark=x] table[x=L2D-PAR++,y=S]{U2.dat};
\addplot[domain=1:10000, smooth]{x};
\legend{$G(n)$,$R(n)$,$F(n)$,$\theta(n)$,other}
\end{loglogaxis}
\end{tikzpicture}
\label{fig:param}}
%
%
\subfloat[Randomly Generated Formulas]{\begin{tikzpicture}[scale=0.69]
\begin{loglogaxis}[xlabel={L2P$_{\parallel}^*$ (states)}, ylabel={S (states)}, xmax=1000, ymax=1000, ymin=1, xmin=1, enlargelimits=false, legend style={at={(0.98,0.02)},anchor=south east}]
\addplot[only marks, mark=x] table[x=L2D-PAR++,y=S]{lpar_fg.dat};
\addplot[only marks, mark=x] table[x=L2D-PAR++,y=S]{lpar_morefg.dat};
\addplot[only marks, mark=x] table[x=L2D-PAR++,y=S]{lpar_uniform.dat};
\addplot[domain=1:1000, smooth]{x};
\end{loglogaxis}
\end{tikzpicture}
\label{fig:random}} \\
%
%
\subfloat[\enquote{Real-world} Formulas]{\begin{tikzpicture}[scale=0.69]
\begin{axis}[xlabel={L2P$_{\parallel}^*$ (states)}, ylabel={S (states)}, xmode=log, ymode=log, xmax=100, ymax=100, ymin=1, xmin=1, enlargelimits=false]
\addplot[only marks, mark=x] table[x=L2D-PAR++,y=S]{lpar_real.dat};
\addplot[domain=1:100, smooth]{x};
\end{axis}
\end{tikzpicture}
\label{fig:real}} 
%
%
\subfloat[TLSF/Acacia]{
\begin{tikzpicture}[scale=0.69]
\begin{axis}[xlabel={L2P$_{\parallel}^*$ (states)}, ylabel={S (states)}, xmode=log, ymode=log, xmax=1000, ymax=1000, ymin=1, xmin=1, enlargelimits=false]
\addplot[only marks, mark=x] table[x=L2D-PAR++,y=S]{syntcomp_acacia.dat};
\addplot[domain=1:10000, smooth]{x};
\end{axis}
\end{tikzpicture}
\label{fig:acacia}}
	\caption{Comparison of Spot and our implementation using the best configurations. Timeouts are denoted by setting the size of the automaton to the maximum.}
        \label{fig:scatter}
	
	\medskip
	
	\captionof{table}{Number of states and number of used colours in parenthesis for the constructed automata. Timeouts are marked with $t$.}
	\label{tab:exp}

\newcolumntype{Z}{>{\centering\let\newline\\\arraybackslash\hspace{0pt}}X}

\begin{center}
\scalebox{0.8}{
\begin{minipage}{1.2\textwidth}
\begin{tabularx}{\textwidth}{l|*{5}{Z}|*{4}{Z}|Z}
\toprule
       				& $f(1,0)$ 	& $f(1,2)$	& $f(1,4)$ & $f(2,0)$ & $f(2,2)$ & zn	  & zp1		& zp2	  & zp3	& Buffer \\
\cmidrule{1-11}				
S     				& 18(6)	& 141(8)	& 2062(8) & 208(12) & 883(12) & t 	  & t 	  	& t 	  & t 	& t 	      \\
L2P 				& 12(8)	& 114(9)	& 332(15) & 144(14) & 4732(19)& t 	  & t 	  	& t 	  & t 	& 1425(27)   \\
L2P$'$ & 12(8) & 78(7)		& 271(11) & 106(9)  & 1904(15)& 32(6) & 42(6)  & 111(12)& 97(12) & 435(4) \\
\bottomrule
\end{tabularx}
\end{minipage}
}
\end{center}

	\myspace\myspace
\end{figure}

\medskip

We consider three groups of benachmarks:

\medskip

\noindent \textbf{Parametric Formulas.} 
10 benchmarks from  \cite{DBLP:conf/lpar/BlahoudekKS13,DBLP:conf/cav/SickertEJK16}).
In six cases S and L2P$'$ produce identical results. The other four are
\[\begin{array}{rlrl}
R(n) & = \wedgeton (\G\F p_i \vee \F\G p_{i+1}) & G(n) & = (\wedgeton \G\F p_i) \rightarrow (\wedgeton \G\F q_i) \\
\theta(n) & = \neg ((\wedgeton \G \F p_i) \rightarrow \G(q \rightarrow \F r)) & F(n) & = \wedgeton (\G\F p_i \rightarrow \G\F q_i) \\
\end{array}\]
\noindent for which the results are shown in (figure \ref{fig:scatter}a). Additionally, we consider the 
``$f$'' formulas from \cite{DBLP:conf/cav/SickertEJK16} (table \ref{tab:exp}). 
Observe that L2P$'$ performs clearly better, and the gap between the tools grows when the parameter increases. 

\medskip \noindent \textbf{Randomly Generated Formulas} from \cite{DBLP:conf/lpar/BlahoudekKS13} (figure \ref{fig:scatter}b). 

\medskip \noindent \textbf{Real Data.} Formulas taken from case studies and synthesis competitions --- the intended domain of application of our approach. Figures \ref{fig:real} and \ref{fig:acacia} show results for 
the real-world formulas of \cite{DBLP:conf/lpar/BlahoudekKS13} and the TLSF specifications contained in the Acacia set of \cite{DBLP:journals/corr/JacobsBBK0KKLNP16}. Table \ref{tab:exp} shows results for LTL formulas expressing properties of
Szymanski's protocol \cite{DeWulf:2008vt}, and for the generalised buffer benchmark of Acacia.

\medskip \noindent \textbf{Average Compression Ratios.}  The geometric average compression ratio for a benchmark suite $B$ 
is defined as ${\prod_{\varphi \in B}  (n_\varphi^S / n^{L2P'}_\varphi)}^{1/|B|}$, 
where $n_\varphi^S$ and $n^{L2P'}_\varphi $ denote the number of states of the automata produced 
by Spot and L2P$'$, respectively. The ratios in our experiments (excluding benchmarks where Spot times out)
are: 1.14 for random formulas, 1.12 for the real-world formulas of \cite{DBLP:conf/lpar/BlahoudekKS13},
and 1.35 for the formulas of Acacia.
\newpage

\section{Conclusion}

We have presented a simple, \enquote{Safraless}, and asymptotically optimal  translation from LTL and LDBA to deterministic parity automata. 
Furthermore, the translation is suitable for an on-the-fly implementation. The resulting automata are substantially smaller
than those produced by the SPOT library for formulas obtained from synthesis specifications, and have comparable or smaller size
for other benchmarks. In future work we want to investigate the performance of the translation as part of a synthesis toolchain.

\paragraph{Acknowledgments.} The authors want to thank Michael Luttenberger for helpful discussions and the anonymous reviewers for constructive feedback.

\bibliographystyle{alpha}
\bibliography{ref}

\newpage
\appendix
\section{Proof of Theorem \ref{thm:coloring}}\label{app:coloring}

\begin{reftheorem}{thm:coloring}
The color summary of the run DAG $G_w$ is even if and only if there is an accepting run in $G_w$.
\end{reftheorem}

\begin{proof}
	``$\Rightarrow$'':
Assume that the color summary of $G_w$ is even and equal to $c$. Then it must be the case that there exists a level $i \geq 0$ such the color after level $i$ is always larger than or equal to $c$, and infinitely many times equal to $c$. W.l.o.g. assume that in level $i$, there exists a vertex $v=(q,i) \in \acc(V^d_i)$ and $c=2 \cdot \ind(v)$. Take the smallest run prefix that ends up in $v$, this run prefix will never merge with a smaller run prefix and all smaller run prefixes that are active in level $i$ will not merge, as otherwise, there would exist a position $j \geq i$ where the index of the run that passes by $(q,i)$ would decrease and this would contradict the fact that for all $j \geq i$, all the colors that are emitted are larger than or equal to $c$. Let us now consider the suffix of the run that pass by $v=(q,i)$. As the even color $c$ is emitted infinitely many times after level $i$, we know that this run suffix crosses infinitely many times $\alpha$. So this run is accepting and this is the smallest such run. 

``$\Leftarrow$'':

(Step 1):
Now, let us consider the other direction. Assume that there exists an accepting run of $A$ on a word $w$. We first establish the existence of a run $\rho$ which is accepting and for which there exists a position $k \geq 0$ from which $\rho$ does not merge with any smaller run, and all smaller runs are non accepting. We identify $\rho$ and $k$ as follows. Among the accepting runs, we select one that enters first in the set of states $Q_d$ say at level $i \geq 0$. They can be several of them but we take one that enters $Q_d$ via a state $q$ of minimal index for ${\sf Ord}$. Let $V_i^d$ be the active states at level $i$ that are in $Q_d$. The way we have chosen $q$ make sure that all the states in $V^d_i$ with a smaller index than $q$ are the origin of non accepting runs and clearly as $\rho$ is accepting it cannot merge with one of those smaller runs. Now, some of those smaller runs may merge in the future, and each time they merge, the index of $\rho$ will decrease. But this will happen a number of times which is bounded by $Q_d$. 

(Step 2): 
Let $k$ be the position when the last merge of a smaller run pefix happens.

(Step 3): 
Let us now show that the existence of $\rho$ and this position $k$ allow us to prove that the color summary is even. After position $k$, there are only odd colors with values larger than or equal to $2 \cdot \ind(\rho(k))+1$ because we know that nor $\rho$ neither smaller runs merge in the future. Also as $\rho$ is accepting, there will be an infinite number of positions $l \geq k$ where the even color is equal to $2 \cdot \ind(\rho(k)))$, and only finitely many positions after $k$ may have an even color which is less than this value as all runs that are smaller than $\rho$ are not accepting. So the summary color is even and equal to $2 \cdot \ind(\rho(k)))$.
\end{proof}

\newpage

\section{Proof of Proposition \ref{prop:merging}}\label{app:merging}

\begin{refproposition}{prop:merging}
	The color summary of the run DAG $G_w^*$ is even if and only if there is an accepting run in $G_w$.
\end{refproposition}
\begin{proof}
	``$\Rightarrow$'':
	The `only-if' direction can be proven as in Theorem~\ref{thm:coloring} verbatim, only replacing $G_w$ by $G_w^*$.
	The reason why the argumentation is still correct, is that the discussed ``smallest run prefix that ends up in $v$'' (now in $G_w^*$) is actually a real run prefix (in $G_w$) since it never secondarily merged. Indeed, runs only merge into smaller ones.
	
	``$\Leftarrow$'':
	(Step 1):
	For the `if' direction, we first use the proof Theorem~\ref{thm:coloring}, Step 1, verbatim, obtaining the smallest accepting run in $G_w$.
	
	Additionally, we prove that this (smallest) constructed run $\rho$ is actually a run in $G_w^*$. 
	For a contradiction, assume that this is not the case and $\rho=\rho_1 (v_k,i) \rho_2$ where $(v_k,i)$ is the first vertex on $\rho$ that secondarily merged.
	Then there is $(v_j,i)\in V_i\cap(Q_d\times\{i\})$ with $v_j\sqsubset v_k$ and $\lang(v_j)$ contains the label of the run $(v_k,i) \rho_2$, accepted by some run $(v_j,i)\rho_2'$ in $G_w$.
	Since $(v_j,i)\sqsubset_i(v_k,i)$, we also have a run prefix $\rho_1'(v_j,i)\sqsubset\rho_1(v_k,i)$, and thus an accepting run $\rho_1'(v_j,i)\rho_2'$ in $G_w$ such that $\rho_1'(v_j,i)\rho_2'\sqsubset \rho_1 (v_k,i) \rho_2=\rho$, a contradiction with minimality of $\rho$. 
	
	(Step 2): Let $k$ be the position when the last merge of a smaller run pefix happens in $G_w^*$ (not $G_w$).
	
	(Step 3): We use the proof Theorem~\ref{thm:coloring}, Step 3, verbatim, proving the color summary is even.	
	\qed
\end{proof}

\section{Proof of Proposition \ref{prop:translation}}\label{app:ltl}

We start by recalling the LTL$\rightarrow$LDBA translation of \cite{DBLP:conf/cav/SickertEJK16}.

\paragraph{Preliminaries.} 
The translation assumes that formulas are in {\em negation normal form}, given 
by the syntax
\begin{align*}
\varphi::= & \; \true \mid \false \mid a \mid \neg a \mid \varphi\wedge\varphi \mid \varphi\vee\varphi \mid \X\varphi \mid \F\varphi \mid \G\varphi \mid \varphi\U\varphi
\end{align*}
\noindent where $a$ belongs to a finite set of atomic propositions. Every formula over the usual 
syntax of LTL (with negation and the $\X$ and $\U$ operators) can be normalized with linear blowup 
if formulas are represented by their syntax DAGs, where two occurrences of the same subformula are
represented by the same node.

We recall the ${\it af}$ function introduced in \cite{DBLP:conf/cav/EsparzaK14,DBLP:conf/cav/SickertEJK16},
and some of its properties. Let $\nu$ be a letter. The formula 
$\aft(\varphi,\nu)$, read ``$\varphi$ after $\nu$'' is  inductively defined as follows \cite{DBLP:conf/cav/EsparzaK14,DBLP:conf/cav/SickertEJK16}:
\[\begin{array}[t]{lcl}
\aft(\true,\nu)& = & \true \\[0.1cm]
\aft(\false,\nu)& = & \false \\[0.1cm]
\aft(a,\nu)& = & \left\{
\begin{array}{ll}
\true & \mbox{if $a \in \nu$} \\[0.1cm]
\false & \mbox{if $a \notin \nu$}\end{array}\right. \\[0.3cm]
\aft(\neg a,\nu)&=& \left\{
\begin{array}{ll}
\false & \mbox{if $a \in \nu$} \\ [0.1cm]
\true & \mbox{if $a \notin \nu$}\end{array}\right. \\[0.3cm]
\end{array} 
\hspace{0.5cm}
\begin{array}[t]{lcl}
\aft(\varphi\wedge\psi,\nu)&=& \aft(\varphi,\nu)\wedge\aft(\psi,\nu) \\[0.1cm]
\aft(\varphi\vee\psi,\nu)&=& \aft(\varphi,\nu)\vee\aft(\psi,\nu) \\[0.1cm]
\aft(\X\varphi,\nu)& = &  \varphi\\[0.1cm]
\aft(\G\varphi,\nu)&= & \aft(\varphi,\nu)\wedge \G\varphi\\[0.1cm]
\aft(\F\varphi,\nu)&= & \aft(\varphi,\nu)\vee \F\varphi\\[0.1cm]
\aft(\varphi\U\psi,\nu)&=& \aft(\psi,\nu)\vee(\aft(\varphi,\nu)\wedge \varphi\U\psi)
\end{array}\]
Furthermore, we define: $\aft(\varphi, \epsilon) = \varphi$, and
$\aft(\varphi, \nu w) = \aft(\aft(\varphi,\nu), w)$ for every letter $\nu$ and every finite word $w$.
The function $\aft$ has the following two properties \cite{DBLP:conf/cav/EsparzaK14,DBLP:conf/cav/SickertEJK16}
for every formula $\varphi$, finite word $v$, and $\omega$-word $w$:
\begin{itemize}
	\item[(i)]  $vw \models \varphi$ if{}f $w \models \aft(\varphi,v)$.
	\item[(ii)]  $\aft(\varphi, v)$ is a boolean combination of subformulas of $\varphi$.
\end{itemize}

A formula is {\em proper} if it is neither a conjunction nor a disjunction.
The propositional formula $\varphi_P$ of a formula $\varphi$ is the result of
substituting every maximal proper subformula $\psi$ of $\varphi$ by a propositional variable $x_\psi$. 
For example, if $\varphi = \X b \vee (\G(a \vee \X b) \wedge \X b)$ with $\psi_1 = \X b$ and $\psi_2 = \G(a \vee \X b)$, 
then $\varphi_P = x_{\psi_1} \vee (x_{\psi_2} \wedge x_{\psi_1})$. 
Two formulas $\varphi, \varphi'$ are {\em propositionally equivalent}, denoted $\varphi \equiv_P \varphi'$, 
if $\varphi_P$ and $\varphi'_P$ are equivalent. 
So, for example, $\X b$ is propositionally equivalent to $\X b \vee (\G(a \vee \X b) \wedge \X b)$.
Observe that propositional equivalence implies equivalence, but the contrary does not hold. For example,
$\F a \wedge \G a$ and $\G a$ are equivalent, but not propositionally equivalent. 

The states of the LDBA for an LTL formula are equivalence classes of formulas (or tuples thereof) 
with respect to propositional equivalence. However, we abuse language and write that the 
states are formulas or tuples of formulas.

\paragraph{Translating LTL to LDBA.}

\newcommand{\deltain}{\delta_{\it in}}
\newcommand{\Ain}{\A_{\it in}}
\newcommand{\Aac}{\A_{\it ac}}

Fix a formula $\varphi$.  We describe the LDBA $\A_\varphi$. We use 
$\varphi = c \vee\X \G (a \vee \F b)$ as running example. We abbreviate
$\psi := (a \vee \F b)$, and write $\varphi = c \vee \X\G\psi$. 
The LDBA $\A_\varphi$ is shown in Figure \ref{fig:complete}.

\begin{figure}[ht]
	\begin{center}
		\begin{tikzpicture}[x=2cm,y=1.5cm,font=\footnotesize,initial text=,outer sep=\os]
		
		\node[state,initial] (a) at (0,0) {$\varphi$};
		\node[state] (b) at (0,-1) {$\G\psi$};
		\node[state] (c) at (0,-2) {$\G\psi \wedge \F b$};
		
		\node[state] (d) at (1.7,0) {$\true$};
		\node[state] (1c) at (1.7,-1) {$\langle c, \cdot \rangle$};
		\draw node at (3.4,-1) {\textcolor{red}{$\Large c$}};
		\node[state] (1t) at (1.7,-2) {$\langle \true, \cdot \rangle$};
		\draw node at (3.4,-2) {\textcolor{red}{$\Large \true$}};
		
		\node[state] (1F) at (0,-3) {$\langle \F b, (\psi, \true) \rangle$};
		\draw node at (-1.6,-3) {\textcolor{red}{$\Large \G\psi \wedge \F b$}};
		\node[state] (2F) at (0,-4) {$\langle \F b, (\F b, \psi) \rangle$};
		\draw node at (-1.6,-4) {\textcolor{red}{$\Large \G\psi \wedge \F b$}};
		\node[state] (3F) at (0,-5) {$\langle \F b, (\F b, \F b) \rangle$};
		\draw node at (-1.6,-5) {\textcolor{red}{$\Large \G\psi \wedge \F b$}};
		
		\node[state] (1) at (1.7,-3)      {$\langle \true, (\psi, \true) \rangle$};
		\draw node at (3.4,-3) {\textcolor{red}{$\Large \G\psi \wedge \psi$}};
		\node[state] (2) at (1.7,-4)     {$\langle \true, (\F b, \psi) \rangle$};
		\draw node at (3.4,-4) {\textcolor{red}{$\Large \G\psi \wedge \F b$}};
		\node[state] (3) at (1.7,-5)     {$\langle \true, (\F b, \F b) \rangle$};
		\draw node at (3.4,-5) {\textcolor{red}{$\Large \G\psi \wedge \F b$}};
		
		\draw[dashed] (-1.0,-2.35) -- (0.75, -2.35) -- (0.75, -0.60) -- (3.3, -0.60);
		
		\path[->]
		(a) edge node[left] {$\bar{c}$} (b)
		edge node[above] {$c$} (d)
		edge node[above] {$\epsilon$} (1c)
		edge node[above] {$\epsilon$} (1)
		(b) edge [loop right, max distance=6mm,in=165,out=195,looseness=15] node[left] {$a+b$} (b)
		edge [bend right=-30] node[left]{$\bar{a}\bar{b}$} (c)
		edge node[above]{$\epsilon$} (1)
		(c) edge [bend left=30]  node[left] {$b$} (b)
		edge node[right]{$\epsilon$} (1F)
		edge [loop left,max distance=6mm,in=165,out=195,looseness=15] node[left]{$\bar{b}$} (c)
		(d)  edge [loop left, max distance=6mm,in=20,out=340,looseness=15] node[right] {$\true$} (d)
		edge [bend left=60] node[right] {$\epsilon$} (1t)
		edge [bend left=75] node[right] {$\epsilon$} (1);

		\path[->] (1F) edge node[right]{$\bar{a}\bar{b}$} (2F)
		edge node[below]{$b$} (1)
		edge [loop left, max distance=6mm,in=165,out=195,looseness=15] node[left]{$a\bar{b}$} (1F)
		(2F) edge node[right] {$\bar{a}\bar{b}$} (3F)
		edge node[below] {$b$} (1)
		edge  [loop left, max distance=6mm,in=165,out=195,looseness=15] node[left]{$a\bar{b}$} (2F)
		(3F) edge [loop below, max distance=6mm,in=250,out=290,looseness=15] node[below]{$\bar{b}$} (3F)
		edge node[below]{$b$} (1);
		
		\path[->] (1) edge node[right]{$\bar{a}\bar{b}$} (2)
		edge [loop right, max distance=9mm,in=60,out=90,looseness=5, double] node[right]{$a + b$} (1)
		(2) edge node[right] {$\bar{a}\bar{b}$} (3)
		edge[bend left=-60, double] node[right] {$b$} (1)
		edge  [loop right, max distance=4mm,in=340,out=20,looseness=15]node[right]{$a\bar{b}$} (2)
		(3) edge [loop below, max distance=6mm,in=250,out=290,looseness=15] node[below]{$\bar{b}$} (3)
		edge [bend left=-90, double] node[right]{$b$} (1);
		
		\path[->] (1c) edge node[right]{$c$} (1t)
		(1t) edge [loop right, max distance=6mm,in=340,out=20,looseness=15, double] node[right] {$\true$} (1t);
		
		\end{tikzpicture}
	\end{center}
	\caption{Automaton $\compaut$ for $\varphi = c \vee \X \G (a \vee \F b)$. The initial component is above the dashed line, the accepting component below.}
	\label{fig:complete}
\end{figure}
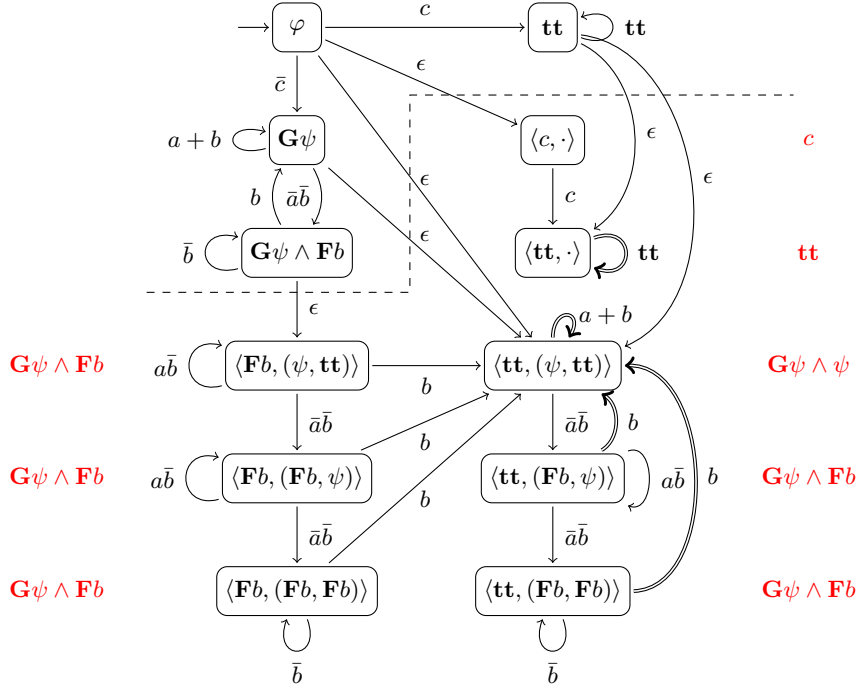

The LDBA $\A_\varphi$ consists of two deterministic components, called the {\em initial} and {\em accepting} 
components, and denoted $\Ain$ and $\Aac$, respectively---in Figure \ref{fig:complete} they 
are shown above and below the dashed line. The accepting component $\Aac$ is the union 
(defined componentwise for states, transitions, and accepting states) of subcomponents 
$\A_\setG$, one for each set $\setG$ of $\G$-subformulas of $\varphi$--that is, if $\varphi$
has $n$ different $\G$-subformulas, then $\Aac$ is the union of $2^n$ subcomponents). 
Transitions of $\A_\varphi$ labeled by an alphabet letter connect either two states 
of $\Ain$, or two states of the same subcomponent of $\Aac$. 
Further, for each state $q$ of $\Ain$ and each set $\setG$ there is an $\epsilon$-transition 
leading from $q$ to a state of $\A_\setG$. 

\medskip

\noindent {\bf Initial component $\Ain$:} Define the set of formulas {\em reachable} from $\varphi$ as 
$\Reach(\varphi)~=~\{\psi \mid \exists w. ~ \psi = \aft(\varphi,w)\}$. The set of states of $\Ain$  is ${\it Reach}(\varphi)$. 
The initial state is $\varphi$. The transition function $\deltain$ is given by 
$\deltain(\psi, a) = \aft(\psi, a)$. 
Intuitively, $\Ain$ monitors the formula that has to hold at the current moment for
$\varphi$ to hold at the beginning.

\medskip

\noindent {\bf Accepting component $\Aac$:} The accepting component $\Aac$ 
is the union of subcomponents $\A_\setG$, one for each $\setG \subseteq \gsf(\varphi)$.

Let $\gsf(\varphi)$ denote the set of all $\G$-subformulas of $\varphi$. 
Given a set $\setG \subseteq \gsf(\varphi)$ and a formula $\psi$, we write $\psi[\setG]$ as an abbreviation
for $\psi[\setG, \gsf(\varphi) \setminus \setG]$, i.e., for the 
result of substituting $\true$ for each maximal occurrence of a formula of $\setG$ in $\psi$,
and $\false$ for each maximal occurrence of a formula of $\gsf(\varphi) \setminus \setG$ in $\psi$.
For example, if $\setG = \{\G(a \vee \G b)\}$ then $\G b \vee \X (a \wedge \G(a \vee \G b))[\setG] = \false \vee 
\X (a \wedge \true) \equiv \X a$. 

Each subcomponent $\A_\setG$ is a product of DBAs: One for the formula $\varphi[\setG]$, and one for each  
formula of the form $\G(\psi[\setG])$, where $\G\psi \in \setG$. Observe that $\varphi[\setG]$ is a $\G$-free formula,
and $\G(\psi[\setG])$ does not have nested $\G$s. For example, if $\setG = \{ \G (a \vee \G b)\}$, then $\A_\setG$ is the
 product of three DBAs, one for $\varphi[\setG]$, one for $\G (b [\setG]) = \G b$, and a third one for $\G ((a \vee \G b)[\setG]) = \G (a \vee \false) \equiv_P \G a$. We call the DBAs for $\varphi[\setG]$ and $\G(\psi[\setG])$ the 
{\em monitors}. 

\medskip

\noindent {\bf Monitor for $\varphi[\setG]$.} The set of states is ${\it Reach}(\varphi[\setG])$, the transition function $\delta_{\varphi[\setG]}$ is given by $\delta_{\varphi[\setG]}(\psi, a) = \aft(\psi, a)$. The only final state is $\true$.
The initial state is left unspecified. 

\begin{itemize}
\item[$\star$] Lemma 2 of \cite{DBLP:conf/cav/SickertEJK16} shows that the 
$\varphi[\setG]$-monitor accepts a word $w$ from a state $q$ 
if{}f $w$ satisfies the formula $q$.
\end{itemize}

\noindent {\bf Monitor for $\G (\psi[\setG]$).} Let us abbreviate $\psi[\setG]$ as $\psi'$. The monitor for $\G\psi'$
is the DBA $\oracledet(\G\psi') = (2^{Ap}, \Reach(\psi') \times \Reach(\psi'), \delta, (\psi', \true), F)$ where
\begin{itemize}
	\item $\delta(\, (\xi_1, \xi_2) \, , \nu) = \begin{cases} (\, \aft(\xi_2, \nu) \wedge \psi' \, , \, \true \, ) & \text{if } \aft(\xi_1, \nu) \equiv_P \true \\[0.1cm] (\, \aft(\xi_1, \nu) \, , \, \aft(\xi_2, \nu) \wedge \psi' \, ) & \text{otherwise} \end{cases}$ \\[0.1cm]
	\item $F = \{(\, (\xi_1, \xi_2) \, , \nu, p) \in Q \times 2^{Ap} \times Q \mid \aft(\xi_1, \nu) \equiv_P \true\}$
\end{itemize}

\begin{itemize}
\item[$\star$] Lemma 5 of \cite{DBLP:conf/cav/SickertEJK16} proves that $\oracledet(\G\psi')$ accepts a word $w$ (from its initial state $(\psi', \true)$) if{}f $w \models \G\psi'$. 
\end{itemize}

\noindent {\bf Product.} Fix a set $\setG = \{ \G\psi_1, \ldots, \G\psi_n\} \subseteq \gsf$ of $\G$-subformulas
of $\varphi$. For every index $1 \leq i \leq n$, let $\oracledet_i = (2^{\it Ap},Q_i, \delta_i, q_{0i}, F_i)$ be the 
monitor for $\G(\psi_i[\setG])$. The {\em product} of these monitors is the generalized deterministic B\"uchi automaton 
\[
\prodaut(\setG) = 
(\; 2^{Ap} , 
 \; \prod_{i=1}^n Q_i , 
 \; \prod_{i=1}^n \delta_i , 
 \; (q_{01}, \ldots, q_{0n}) , 
 \; \{F_1', \ldots, F_n'\} \; )
\] 
where $( \; (q_1, \ldots, q_n), \nu, (q, q_1', \ldots, q_n') \; )$ is  a transition of 
$F_i'$ if{}f $(q_i, \nu, q_i') \in F_i$.

\begin{itemize}
\item[$\star$] Lemma 5 of \cite{DBLP:conf/cav/SickertEJK16} proves that $\prodaut(\setG)$ 
accepts $w$ if{}f $w \models \G(\psi[\setG])$ for all $\G\psi \in \setG$.
\end{itemize}

\noindent {\bf Subcomponent $\A_\setG$}. The subcomponent is the product of the monitor
for $\varphi[\setG]$ and $\prodaut(\setG)$:
\[\A_\setG  = (\; 2^{Ap} \; , \; {\it Reach}(\varphi[\setG]) \times \prod_{i=1}^n Q_i \; , \; \delta_{\varphi[\setG]} \times \prod_{i=1}^n \delta_i \; , \; \{ \{ \tt \} \times F_1', \ldots, \{ \tt \} \times F_n'\} \; )\]

\begin{itemize}
\item[$\star$] We have: $\A_\setG$ accepts a word $w$ from the state $(\varphi'[\setG], q_{01}, \ldots, q_{0n})$ 
if{}f $w \models \varphi'[\setG] \wedge \G(\psi[\setG])$.
\end{itemize}

\noindent {\bf Connecting $\epsilon$-transitions:} Finally, we describe the $\epsilon$-transitions connecting 
the initial component $\Ain$ to the accepting component $\Aac$.
There is an $\epsilon$-transition for each state $\varphi'$ of $\Ain$ and
each set $\setG= \{ \G\psi_1, \ldots, \G\psi_n\} \subseteq \gsf(\varphi)$. The transition is 
$\big(\varphi', \epsilon, (\, \varphi'[\setG], \G(\psi_1[\setG]), \ldots, \G(\psi_n[\setG])\,)\big)$.
\begin{itemize}
\item[$\star$] {\bf Theorem 1} of \cite{DBLP:conf/cav/SickertEJK16} proves that $w \models \varphi$ if{}f  $\A_\varphi$ accepts $w$. 
\end{itemize}
\noindent This concludes the description of $\A_\varphi$.

\paragraph{Proof of Proposition \ref{prop:translation}.}

We start by generalizing Lemma 5 of \cite{DBLP:conf/cav/SickertEJK16} as follows:

\begin{lemma} \label{lem:prodcorrect2}
$\oracledet(\G\psi')$ accepts a word $w$ from a state $(\xi_1, \xi_2)$ if{}f 
$w \models \G\psi' \wedge \xi_1 \wedge \xi_2$.
\end{lemma}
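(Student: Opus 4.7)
The statement generalises Lemma~5 of \cite{DBLP:conf/cav/SickertEJK16}, which is precisely the instance $(\xi_1,\xi_2)=(\psi',\true)$. I would prove it by adapting the same strategy: establish the semantic equivalence directly, using that $\psi'=\psi[\setG]$ is $\G$-free, so every formula in $\Reach(\psi')$ is a Boolean combination of subformulas of $\psi'$ (property (ii) of $\aft$), and its satisfaction by an $\omega$-word is determined by some finite prefix.

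For the direction $(\Leftarrow)$, assume $w\models\G\psi'\wedge\xi_1\wedge\xi_2$. I would exhibit infinitely many accepting transitions along the unique run from $(\xi_1,\xi_2)$. Since $\xi_1$ is $\G$-free and $w\models\xi_1$, there is a least $n_1$ with $\aft(\xi_1,w[0..n_1))\equiv_P\true$; by the transition rule, the run takes an accepting transition at step $n_1$, and the next state has the shape $(\zeta_1,\true)$ with $\zeta_1=\aft(\xi_2,w[0..n_1))\wedge\psi'$ (up to propositional equivalence). An auxiliary invariant, proved by induction on the number of steps since the last accepting transition, describes the state after reading $w[0..k)$ (for $n_1<k$) as $(\aft(\zeta_1,w[n_1..k)),\bigwedge_{i} \aft(\psi',w[i..k))\wedge\psi')$ where $i$ ranges over the steps strictly between $n_1$ and $k$. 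Property~(i) of $\aft$ combined with $w\models\xi_2$ and $w\models\G\psi'$ gives $w[n_1..]\models\zeta_1\wedge\G\psi'$, so the argument iterates: there is a next accepting step $n_2>n_1$, and so on \emph{ad infinitum}.

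For the direction $(\Rightarrow)$, assume the monitor accepts $w$ from $(\xi_1,\xi_2)$; fix accepting steps $n_1<n_2<\cdots$. The first accepting transition forces $\aft(\xi_1,w[0..n_1))\equiv_P\true$, hence $w\models\xi_1$ by property~(i). Using the same auxiliary invariant describing the state between consecutive accepting transitions, after step $n_j$ the first component is $\aft(\xi_2,w[0..n_1))\wedge\psi'$ (for $j=1$) or $\aft(\psi',w[n_{j-1}..n_j))\wedge(\text{conjunction of other }\aft(\psi',\cdot))\wedge\psi'$ (for $j>1$). That step $n_2$ is accepting then yields $w\models\xi_2$ (via the first conjunct) and $w[n_1..]\models\psi'$ (via the second); iterating, $w[n_j..]\models\psi'$ for every $j$. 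Since between any two consecutive $n_j,n_{j+1}$ the formula $\psi'$ is refreshed at every intermediate position, this extends to $w[m..]\models\psi'$ for every $m\geq n_1$, and then for all $m$ using the obligations already packed into $\xi_1$; hence $w\models\G\psi'$.

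The main obstacle is formulating the auxiliary invariant precisely enough so that the inductive step handles the repeated insertion of fresh copies of $\psi'$ into the second component and the ``swap'' at accepting transitions; once the invariant is in place, both directions follow by combining property~(i) of $\aft$ with the $\G$-freeness of the formulas involved, exactly as in the proof of Lemma~5 that we are generalising.
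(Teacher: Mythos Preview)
Your approach differs from the paper's. You redo the direct semantic argument of Lemma~5 in the generalised setting, tracking an explicit invariant on the shape of the state reached. The paper instead argues by induction on the length of a word $v$ leading from the initial state $(\psi',\true)$ to $(\xi_1,\xi_2)$: the base case \emph{is} Lemma~5, and the inductive step uses determinism to reduce acceptance of $w$ from $(\xi_1,\xi_2)$ to acceptance of $\nu w$ from the $\nu$-predecessor, then applies property~(i) of $\aft$. This avoids reformulating and re-proving the whole breakpoint invariant.

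There is, however, a genuine gap in your $(\Rightarrow)$ direction. Both approaches require the structural observation that for every \emph{reachable} $(\xi_1,\xi_2)$ one has $\xi_1\wedge\xi_2\models\psi'$; the paper states the slightly stronger $\xi_1\wedge\xi_2\equiv\xi\wedge\psi'$ for some $\xi$, which follows in one line from the transition rule (after any step one of the two components is literally of the form $\eta\wedge\psi'$). Your argument yields $w\models\xi_1$, $w\models\xi_2$, and $w[m..]\models\psi'$ for all $m\geq 1$, but the sentence ``and then for all $m$ using the obligations already packed into $\xi_1$'' does not deliver $w\models\psi'$: in general $\xi_1$ alone need not imply $\psi'$. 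For instance, with $\psi'=a\vee\X b$, reading $\emptyset$ from $(\psi',\true)$ leads to $(\xi_1,\xi_2)=(b,\,a\vee\X b)$, where $b\not\models a\vee\X b$; it is $\xi_2$ that carries the fresh copy of $\psi'$. Once you add the structural observation, $w\models\xi_1\wedge\xi_2$ immediately gives $w\models\psi'$ and your argument closes. (Your formula for $\zeta_1$ in the $(\Leftarrow)$ direction also drops the intermediate conjuncts $\aft(\psi',w[i..n_1))$, but there the omission is harmless since $w\models\G\psi'$ is assumed.)
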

\begin{proof}
We first claim that there is a formula $\xi$ such that $\xi_1 \wedge \xi_2 \equiv \xi \wedge \psi'$.
If $(\xi_1, \xi_2)=(\psi', \true)$, then $\xi = \true$.
Otherwise there is $(\xi_1', \xi_2')$ and $\nu$ such that $\delta((\xi_1', \xi_2'), \nu) = (\xi_1, \xi_2)$. 
By the definition of the transition function, either $\xi_1 = \eta_1 \wedge \psi'$ for some $\eta_1$,
or $\xi_2 = \eta_2 \wedge \psi'$ for some $\eta_2$, and we can choose $\xi$ accordingly.

By this result, we have $\aft(\xi_1 \wedge \xi_2, \nu) \equiv \aft(\xi \wedge \psi', \nu) \equiv \aft(\xi, \nu) \wedge \aft(\psi', \nu)$ for every state $(\xi_1, \xi_2)$ and letter $\nu$, and so in particular
\begin{equation}
\label{eq1}
\aft(\xi_1, \nu) \wedge \aft(\xi_2, \nu) \models \aft(\psi', \nu)
\end{equation}

We now prove the lemma. Let $v$ be a finite word leading from the initial state $(\psi', \true)$ to $(\xi_1, \xi_2)$. We proceed by induction of the length of $v$.

\noindent {\bf Basis.}  $v = \epsilon$. Then $(\xi_1, \xi_2) = (\psi', \true)$, and so $w \models \G\psi' \wedge \xi_1 \wedge \xi_2$
if{}f $w \models \G\psi' \wedge \psi' \equiv \G\psi'$. By Lemma 5 of \cite{DBLP:conf/cav/SickertEJK16} $w \models \G\psi'$
if{}f $\oracledet(\G\psi')$ accepts $w$ from $(\psi', \true)$, and we are done.

\noindent {\bf Step.} $v = v' \nu$ for some word $v'$ and letter $\nu$. Then there is a state $(\xi_i', \xi_2')$ 
such that $\delta((\psi', \true), v')= (\xi_1', \xi_2')$ and $\delta((\xi_1', \xi_2'), v')= (\xi_1, \xi_2)$.
By induction hypothesis a word $w$ is accepted from $(\xi_1', \xi_2')$ if{}f 
$w \models \G\psi' \wedge \xi_1' \wedge \xi_2'$.
We consider two cases.

If $\aft(\xi_1', \nu) \equiv_P  \true$, then by the definition of $\delta$ we have 
 $\delta((\xi_1', \xi_2'), \nu) = (\aft(\xi_2', \nu) \wedge \psi', \true)$. It follows:

\begin{center}
\begin{tabular}{cl}
       & $\oracledet(\G\psi')$ accepts $w$ from $(\xi_1, \xi_2)$ \\
  if{}f & (determinism) \\
       & $\oracledet(\G\psi')$ accepts $\nu w$ from $(\xi_1', \xi_2')$ \\
  if{}f & (induction hypothesis) \\
       & $\nu w \models \G\psi' \wedge \xi_1' \wedge \xi_2'$ \\
  if{}f & (fundamental property of $\aft$) \\
       & $w \models \aft(\, \G\psi' \wedge \xi_1' \wedge \xi_2' \, , \, \nu)$ \\
  if{}f & (definition of $\aft$) \\
       & $w \models \G\psi' \wedge \aft(\psi' \nu) \wedge \aft(\xi_1', \nu) \wedge \aft(\xi_2', \nu)$ \\
  if{}f & (Equation \ref{eq1}) \\
       & $w \models \G\psi' \wedge \aft(\xi_1', \nu) \wedge \aft(\xi_2', \nu)$ 
\end{tabular}
\end{center}

We conclude the proof by showing
$\G\psi' \wedge \aft(\xi_1', \nu) \wedge \aft(\xi_2', \nu) \equiv \G\psi' \wedge \xi_1 \wedge \xi_2$.
It suffices to prove $\xi_1 \wedge \xi_2 \equiv \aft(\xi_1', \nu) \wedge \aft(\xi_2', \nu) \wedge \psi'$.
Consider two cases:
\begin{itemize}
\item $\aft(\xi_1', \nu) \equiv_P \true$. Then, by the definition of $\oracledet(\G\psi')$, we have
$\xi_1 = \aft(\xi_2', \nu) \wedge \psi'$ and $\xi_2 = \true$, and we are done.
\item $\aft(\xi_1', \nu) \not\equiv_P \true$. Then,by the definition of $\oracledet(\G\psi')$, we have
$\xi_1 = \aft(\xi_1', \nu)$ and $\xi_2 = \aft(\xi_1', \nu) \wedge \psi'$, and we are done. \qed
\end{itemize}
\end{proof}

We can now proceed to prove Proposition \ref{prop:translation}.

\begin{refproposition}{prop:translation}
For every LTL formula $\varphi$, every state $s$ of the LDBA of \cite{DBLP:conf/cav/SickertEJK16} for $\varphi$ can be labelled by an LTL formula $\mathit{label}(s)$ such that (i) $\lang(s)=\lang(\mathit{label}(s))$ and (ii) $\mathit{label}(s)$ is a Boolean combination of subformulas of $\varphi[T_s, F_s]$ for some $T_s$ and $F_s$. Moreover, the LDBA is initial-deterministic.

Further, $\mathit{label}(s)$ can be computed in linear time from the descriptor of $s$.
\end{refproposition}
\begin{proof}
Recall the two properties of the $\aft$ function.
For every formula $\varphi$, finite word $v$, and $\omega$-word $w$:
\begin{itemize}
\item[(i)]  $vw \models \varphi$ if{}f $w \models \aft(\varphi,v)$.
\item[(ii)]  $\aft(\varphi, v)$ is a boolean combination of subformulas of $\varphi$.
Therefore, every formula of ${\it Reach}(\varphi)$ is a boolean combination of subformulas of $\varphi$.
\end{itemize}

Let $s$ be a state of $\Ain$, and let $v$ be any finite 
word leading from $s_0$ to $s$. By Theorem 1 of \cite{DBLP:conf/cav/SickertEJK16}, we have $L(s_0) = L(\varphi)$. 
By (i), $\lang(\aft(\varphi,v)) = \{w \mid vw \in \lang(\varphi) \}$. Since $\Ain$ is deterministic, 
$\lang(s) = \{w \mid vw \in \lang(s_0) \} = \{w \mid vw \in \lang(\varphi) \} = \lang(\aft(\varphi,v))$.
So we can take $\mathit{label}(s) = s$.

We consider now the case that $s$ belongs to $\Aac$. Then there is a set $\setG=(\G\psi_1, \ldots, \G\psi_n)$
such that $s$ belongs to $\A_\setG$. By the definition of $\A_\setG$ as product of DBAs,
$s$ is of the form $(\varphi'[\setG], (\xi_{11}, \xi_{21}), \ldots, (\xi_{1n}, \xi_{2n}))$,
where $\varphi'[\setG]$ is a state of the monitor for $\varphi[\setG]$, and $(\xi_{1i}, \xi_{2i})$ is a state of the 
monitor for $\G(\psi_i[\setG])$.  Further, the words recognized from $s$ are those simultaneously
recognized from $\varphi'[\setG]$, $(\xi_{11}, \xi_{21}), \ldots, (\xi_{1n}, \xi_{2n})$ in their respective automata. 
By Lemma \ref{lem:prodcorrect2}, the words recognized from $s$ are those satisfying 
$\varphi'[\setG] \wedge \xi_{11} \wedge \xi_{21} \wedge \ldots \wedge \xi_{1n} \wedge \xi_{2n}$.
We choose $\mathit{label}(s)$ as this formula. 
It remains to show that each conjunct of $\mathit{label}(s)$ is a boolean combination of formulas of $\mathsf{sf}(\varphi)[\setG]$.

\begin{itemize}
\item By the definition of the monitor for $\varphi[\setG]$, the formula $\varphi'[\setG]$ belongs to 
${\it Reach}(\varphi[\setG])$, and by (ii) we are done.
\item By the definition of the monitor for $\G(\psi_i[\setG])$, the formulas $\xi_{1i}$ and $\xi_{2i}$ belong to 
$\Reach(\psi_i[\setG])$. By (ii), they are boolean combinations of subformulas of $\psi_i[\setG]$.
Since $\psi_i$ is a subformula of $\varphi$, they are also boolean combinations of subformulas of 
$\varphi[\setG]$, and so a boolean combination of formulas of $\mathsf{sf}(\varphi)[\setG]$. \qed
\end{itemize}
\end{proof}

\section{Optimisations}

\subsection{Parallelisation}

Since multiple cores are abundant these days, we use a simple trick to obtain small automata: we launch two threads for $\varphi$ and $\neg \varphi$ and compute the DPA. Observe that complementing parity automata is cheap, since we just need to change the parity of the acceptance condition. We then return the smaller automaton and also prematurely cancel one of the translations, if we already know that the running translation will produce a larger automaton.

\subsection{Reduction of Rankings}

Since we can freely change the order of states we jump to, a good heuristic is to sort jumps to accepting components that are labelled with $\F$-Operators---more precisely, belong to the syntactic class of pure eventual formulas---before any other accepting component. The reasoning is that components for the pure eventual fragment will never be removed from the ranking. Analogously, if an accepting component is only labeled with formulas from the $(\X, a)$-fragment, a safety language is described. These are volatile in the sense that after a fixed number of steps---the nesting depth of the $\X$'s---we know if the obligations are fulfilled or not. Thus we do not need to track all these components. Just one at a time and whenever one of these fails we switch to the next one.

\end{document}